\newtheorem{theorem}{Theorem}
\newtheorem{lemma}{Lemma}
\newtheorem*{lemma*}{Lemma}
\newtheorem{corollary}[lemma]{Corollary}
\newtheorem{claim}[lemma]{Claim}
\newtheorem{definition}{Definition}
\newtheorem{observation}[lemma]{Observation}
\newcommand{\namedref}[2]{\hyperref[#2]{#1~\ref*{#2}}}
\newcommand{\sectionref}[1]{\namedref{Section}{#1}}
\newcommand{\appendixref}[1]{\namedref{Appendix}{#1}}
\newcommand{\theoremref}[1]{\namedref{Theorem}{#1}}
\newcommand{\figureref}[1]{\namedref{Figure}{#1}}
\newcommand{\tableref}[1]{\namedref{Table}{#1}}
\newcommand{\diam}{{\rm diam}}
\newcommand{\compKbeta}[1]{\mathop{\mbox{$[#1]_{\beta}^k$}}}
\begin{document}

\title{Covering Metric Spaces by Few Trees}
\author{Yair Bartal}\affil{Department of Computer Science, Hebrew University of Jerusalem, Israel. Email: \texttt{yair@cs.huji.ac.il}\thanks{
Supported in part by a grant from the Israeli Science Foundation (1817/17).}}
\author{Nova Fandina}\affil{Department of Computer Science, Hebrew University of Jerusalem, Israel. Email: \texttt{fandina@cs.huji.ac.il}\thanks{
Supported in part by a grant from the Israeli Science Foundation (1817/17).}}
\author{Ofer Neiman}\affil{Department of Computer Science, Ben-Gurion University of the Negev, Beer-Sheva, Israel.  Email: \texttt{neimano@cs.bgu.ac.il}\thanks{Supported in part by a grant from the Israeli Science Foundation (1817/17) and in part by BSF grant 2015813.}}
\maketitle

\begin{abstract}
A {\em tree cover} of a metric space $(X,d)$ is a collection of trees, so that every pair $x,y\in X$ has a low distortion path in one of the trees. If it has the stronger property that every point $x\in X$ has a single tree with low distortion paths to all other points, we call this a {\em Ramsey} tree cover.
Tree covers and Ramsey tree covers have been studied by \cite{BLMN03,GKR04,CGMZ05,GHR06,MN07}, and have found several important algorithmic applications, e.g. routing and distance oracles. The union of trees in a tree cover also serves as a special type of spanner, that can be decomposed into a few trees with low distortion paths contained in a single tree; Such spanners for Euclidean pointsets were presented by \cite{ADMSS95}.

In this paper we devise efficient algorithms to construct tree covers and Ramsey tree covers for general, planar and doubling metrics. We pay particular attention to the desirable case of distortion close to 1, and study what can be achieved when the number of trees is small.  In particular, our work shows a large separation between what can be achieved
by tree covers vs. Ramsey tree covers.

\end{abstract}

\section{Introduction}

The problem of approximating metric spaces by tree metrics has been a successful research thread in the past decades, and has found numerous algorithmic applications. This is mainly due to the fact that a tree has a very simple structure that can be exploited by the algorithm designer. While a single tree cannot provide a meaningful approximation, due to a lower bound of \cite{RR98} (the metric of the $n$ point cycle requires $\Omega(n)$ distortion for embedding into a tree), several other variants have been considered in the literature.
The purpose of this paper is to study the natural question whether there exists a small collection of trees (\emph{tree cover})  such that each pair is well preserved in at least one of them.
A natural stronger demand may be that for each point all of its interpoint distances to the rest of the metric are well preserved in one of the trees (\emph{Ramsey tree covers}).

Tree covers and Ramsey tree covers have been studied by \cite{GKR04,BLMN03,CGMZ05,GHR06,MN07}, and are useful ingredients in important algorithmic applications such as routing and distance oracles.

 Given a metric space $(X,d_X)$ and an edge-weighted tree $T$ with $X\subseteq V(T)$, for $x,y\in X$ let $d_T(x,y)$ denote the length of the path in $T$ from $x$ to $y$. We say $T$ is {\em dominating} if $d_T(x,y)\ge d_X(x,y)$ for all $x,y\in X$. A dominating tree $T$ has {\em distortion} $\alpha$ for a pair $x,y\in X$, if $d_T(x,y)\le\alpha\cdot d_X(x,y)$. In what follows, all trees we consider are always dominating (this can be assumed w.l.o.g.).

\begin{definition}[Tree cover]
Given a metric space $(X,d_X)$, for $\alpha \geq 1$ and an integer $k$, a {\em tree cover} with distortion $\alpha$ and size $k$, $(\alpha,k)$-tree cover in short, is a collection of $k$ dominating trees $T_1, \ldots, T_k$, such that for any  $u \neq v$ in $X$ there is a tree $T_i$ with distortion at most $\alpha$ for the pair $u,v$.

If for each $u\in X$ there is a tree $T_i$ with distortion at most $\alpha$ for each pair $u,v$ with $v\in X$, we call this a {\em Ramsey $(\alpha,k)$-tree cover}.
\end{definition}

If the metric is a shortest path metric of some graph $G$, and the trees are subgraphs, we call this a {\em spanning} tree cover.

The notion of tree covers is closely related to the well studied notion of spanners. In the context of  metric spaces, a {\em spanner} with distortion $\alpha$ for the metric $(X,d_X)$, is a graph $H$ with $X\subseteq V(H)$, so that for all $x,y\in X$, $d_X(x,y)\le d_H(x,y)\le\alpha\cdot d_X(x,y)$.   It is often desired that the spanner would be a sparse graph. Note that taking $H$ as the union of the trees in a (Ramsey) tree cover forms a sparse spanner with a special structure; that can be decomposed into a few trees, and every pair (or every point) has the distortion guarantee in one of these trees. In the context of graphs $H$ is usually required to be a subgraph of the original graph, and then the same holds for spanning tree covers. Spanners are basic graph constructions, have been intensively studied \cite{PS89,ADDJ90,C93,CDNS95,ADMSS95,EP04,BS03,TZ06,NS07} and have numerous applications in various settings, see e.g. \cite{A85,PU89b,ABCP93,C93,TZ01,EP15}.

A related well-studied concept is probabilistic embedding of a metric space into tree metrics.
This notion was introduced by Bartal \cite{bartal96}, and a sequence of works by Bartal, and Fakcharoenphol et al. \cite{B98,FRT04,B04} culminated in obtaining a tight $O(\log n)$ bound. The result of \cite{bartal96} already implies a probabilistic construction of tree covers of size $k$ with distortion $O(n^{2/k}\log n)$, for general metrics spaces. In Theorem \ref{thm:ramsey_upper_bound} we improve this by constructing deterministic Ramsey tree covers with almost optimal distortion (nearly matching the lower bound in Theorem~\ref{thm:strong_lower_bound}).

In the rest of the section we review known results on tree covers and Ramsey tree covers, and present the new results of this paper.

\subsection{Tree Covers}
In the context of Euclidean spanners, Arya et al. \cite{ADMSS95} used the so called dumbbell trees to build low distortion spanners. Rephrased in our context, they obtained a tree cover for Euclidean pointsets. More specifically, for any finite set of points in $d$-dimensional Euclidean space, and any parameter $\epsilon>0$, they devised a $(1+\epsilon)$-distortion tree cover with $O((d/\epsilon)^d\log(d/\epsilon))$ trees. We note that their trees are using Steiner points (i.e., points in Euclidean space that are not part of the input set), and it is not clear that such points can be removed from the spanner while maintaining  $(1+\epsilon$) distortion.

Chan et al. \cite{CGMZ05} presented tree covers for doubling metrics. The {\em doubling constant} of a metric $(X,d_X)$ is the minimal $\lambda$, so that every ball of radius $2r$ can be covered by $\lambda$ balls of radius $r$. The {\em doubling dimension} of $(X,d_X)$ is defined as $\log\lambda$, and a family of metrics is called {\em doubling} if every metric in it has doubling dimension $O(1)$. The result of \cite{CGMZ05} used hierarchical partitioning to construct a tree cover with distortion $O(\log^2\lambda)$ and $O(\log\lambda\cdot\log\log\lambda)$ trees.

The notion of spanning tree covers was introduced by Gupta et al. in \cite{GKR04}, who used these for MPLS routing.
They devised spanning tree covers for planar graphs: an exact (i.e. distortion 1) tree cover with $O(\sqrt{n})$ trees (more generally $O(r(n)\log n)$ trees for graphs admitting a hierarchical $r(n)$ size separators), and a spanning tree cover with distortion 3 and only $O(\log n)$ trees. They also showed the former result for planar graphs is tight, i.e., at least $\Omega(\sqrt{n})$ trees are needed for an exact tree cover.

\subsubsection{Our results} As a starting point for this study, we observe, that for general metrics, the number of trees of any tree covers with distortion $\alpha$ must be as large as $n^{1/\alpha}$. (This bound stems from the standard example of high girth graphs and extends a previous lower bound for spanning trees of \cite{GKR04}). Nearly optimal upper bounds are known even for Ramsey tree covers (see next subsection). The above lower bound also implies a lower bound of $\lambda^{1/\alpha}$ in any space with doubling constant $\lambda$.

One of our main results is a tree cover for doubling metrics. We develop a novel hierarchical clustering for such metrics, built in a bottom-up manner. We then use this new clustering to show that for any $0<\epsilon<1$, every metric with doubling constant $\lambda$ admits a tree cover with distortion $1+\epsilon$ and only $(1/\epsilon)^{O(\log\lambda)}$ trees. Since $d$-dimensional Euclidean space has doubling dimension $\Theta(d)$, the number of trees in the cover is therefore $(1/\epsilon)^{O(d)}$. Hence, this can be viewed as both a generalization and improvement of the result of \cite{ADMSS95}. Moreover, we improve their result in another aspect, since unlike \cite{ADMSS95} we do not require the use of Steiner points. In particular, for any $\epsilon>0$ our result provides a $(1+\epsilon)$-spanner with $n/\epsilon^{O(\log\lambda)}$ edges, that can be decomposed to a small number of trees, and where each pair has a $1+\epsilon$ stretch path in one of the trees. We note that the number of edges in this spanner is asymptotically optimal \cite{RS98}, and thus so is our result.

We then turn to obtaining a distortion-size tradeoff for tree covers of doubling spaces with arbitrary distortion $\alpha$.
We improve and extend the result of \cite{CGMZ05}; for any parameter $\alpha$, we use a more sophisticated construction of hierarchical  partitions, to build a tree cover with distortion $O(\alpha)$ and $O(\lambda^{1/\alpha} \cdot\log\lambda\cdot\log \alpha)$ trees (note that setting $\alpha=\log\lambda$ yields distortion $O(\log\lambda)$ with $O(\log\lambda\cdot \log \log \lambda)$ trees). We note that the trees obtained here are in fact \emph{ultrametrics}\footnote{An ultrametric $(U,d_U)$ is a metric satisfying a strong form of the triangle inequality, $\forall x,y,z$, $d_U(x,z)\le\max\{d_U(x,y),d_U(y,z)\}$. An ultrametric is both a tree metric and a Euclidean metric.}. This result provides a special type of spanner with distortion $O(\alpha)$ and $O(n\cdot\lambda^{1/\alpha}\cdot\log\lambda\cdot\log\alpha)$ edges, which improves the recent spanner construction of \cite{FN18}, whose number of edges was larger by a factor of $O(\log_\lambda n)$ (though their spanner has additionally bounded lightness). The lower bound mentioned above for doubling spaces implies that our tree cover bounds cannot be substantially improved.

For planar graphs with $n$ vertices, and more generally graphs excluding a fixed minor,
we apply the path-separators framework of \cite{T04,AG06}, and show that for any $\epsilon>0$, there exists a tree cover with distortion $1+\epsilon$ and $O((\log n)/\epsilon)^2$ trees. (Recall that for distortion 1, \cite{GKR04} used the planar separators of \cite{LT80}, but this requires $\Omega(\sqrt{n})$ trees.)
We also observe that certain hierarchical partitions of \cite{KLMN04} for planar (and fixed minor-free) graphs,
can be used to obtain a tree cover with $O(1)$ distortion and only $O(1)$ trees (the obtained trees are ultrametrics).

See \tableref{table:tree-cover} for a succinct comparison between our and previous results on tree covers.

\begin{table*}[ht]
\centering
\small
\begin{tabular}{ |l|l|l|l| }
\hline
Family & Reference & Number of trees & Distortion \\
\hlineB{2}
General metrics
& {\bf New} & $\Omega(n^{1/\alpha})$ & $\alpha$ \\
\hlineB{2}
\multirow{4}{*}{Doubling metrics} & \cite{CGMZ05} & $O(\log\lambda\cdot\log\log\lambda)$ & $O(\log^2\lambda)$ \\
\cline{2-4}
& {\bf New} & $(1/\epsilon)^{\Theta(\log\lambda)}$ & $1+\epsilon$ \\
\cline{2-4}
& {\bf New} & $O(\lambda^{1/\alpha}\cdot\log\lambda\cdot\log\alpha)$ & $O(\alpha)$ \\
\cline{2-4}
& {\bf New} & $\Omega(\lambda^{1/\alpha})$ & $\alpha$ \\
\hlineB{2}
\multirow{4}{*}{Planar metrics} & \cite{GKR04} & $\Theta(\sqrt{n})$ & 1 \\
\cline{2-4}
& \cite{GKR04} & $O(\log n)$ & 3 \\
\cline{2-4}
& {\bf New} & $O((\log n)/\epsilon)^2$ & $1+\epsilon$ \\
\cline{2-4}
& \cite{KLMN04}+{\bf New} & $O(1)$ & O(1) \\
\hlineB{2}
\end{tabular}
\caption{Results on tree covers for general, planar (our new results also hold for fixed minor-free graphs) and doubling metrics. (The upper bound for general metrics appears in \tableref{table:Ramsey-trees}.)}
\label{table:tree-cover}
\end{table*}

\subsection{Ramsey Tree Covers}\label{sec:RTC}
Given a metric $(X,d)$, the metric Ramsey problem asks for a large subset $S\subseteq X$ that embeds with a given distortion into a simple metric, such as a tree metric or Euclidean space. Following \cite{BLMN03}, \cite{MN07} gave a probabilistic construction that finds in any $n$ point metric $(X,d)$ a set $S\subseteq X$ of size at least $n^{1-1/\alpha}$ that embeds into an ultrametric with distortion $O(\alpha)$. In fact, the embedding has such distortion on all pairs in $S\times X$.  Applying this iteratively, \cite{MN07} obtained a collection of $O(\alpha\cdot n^{1/\alpha})$ trees, so that each point $x\in X$ has a "home tree" $T_x$ with distortion $O(\alpha)$ for every pair containing $x$. We call such a collection a { \em Ramsey tree cover}. Some further works aim at improving the leading constant in the distortion (\cite{Bar11,NT12,BGS16}) and finding a deterministic construction (\cite{Bar11}).
Recently, in the graph setting, \cite{ACEFN18} devised a {\em spanning} Ramsey tree cover, where the trees are subgraphs of the input graph. They obtained the same number of trees, but with slightly larger distortion $O(\alpha\cdot\log\log n)$.

We note that the number of trees in all previous works is $\alpha\cdot n^{1/\alpha}\ge\log n$ for any value of $\alpha$. It seems like a natural question to understand what can be achieved in the inverse tradeoff, where the number of trees, $k$,  is small. We remark that the lower bound via high girth graphs is rather weak, it implies that using $k$ trees the distortion must be only $\Omega(\log_kn)$, as that is the bound on the girth of a graph with $kn$ edges \cite{Boll04}.

\subsubsection{Our results.}
We focus on the regime where the number of trees is small. We first observe that a similar method as used in \cite{MN07} of iteratively extracting large Ramsey subspaces can be applied in this setting as well. Given any metric space $(X,d)$ on $n$ points, and a parameter $k\ge 1$, there exists a Ramsey tree cover of size $k$ (in particular, ultrametrics) and distortion $O(n^{1/k}\cdot\log^{1-1/k} n)$. We also note that the result of \cite{ACEFN18} can be translated to this setting: given a graph $G=(V,E)$ with $n$ vertices, we find a Ramsey {\em spanning} tree cover with $k$ spanning trees and distortion $O(n^{1/k}\cdot\log^{1-1/k} n\cdot\log\log n)$. 

Next, we investigate the tightness of this bound. We find a graph on $n$ vertices, such that any Ramsey tree cover with $k$ trees requires distortion $\Omega(n^{1/k})$, significantly improving the $\Omega(\log_kn)$ bound obtained from high girth graphs. This also implies that our upper bound is tight, up to lower order terms. 

Moreover, the graph we construct is series-parallel (in particular a planar graph) and also has $O(1)$ doubling dimension. Thus, our lower bound indicates a large separation between what can be achieved by a tree cover vs. a Ramsey tree cover; Our upper bounds give a tree cover with $O(1)$ trees and constant distortion for planar and doubling metrics (even $1+\epsilon$ distortion for the latter), as opposed to the $n^{\Omega(1)}$ distortion required with a constant number of Ramsey trees, for both planar and doubling metrics.

We also use a result of \cite{BLMN03} to show a lower bound for planar and doubling metrics in the low distortion regime: there are $n$-point planar (in fact, series-parallel) doubling metrics, such that any Ramsey tree cover with distortion $\alpha$ must contain at least $n^{\Omega(1/(\alpha\log \alpha))}$ trees.

Overall, for general, planar and doubling metrics, our results solve the question of covering metrics by Ramsey trees, up to logarithmic terms, in every regime of parameters. See \tableref{table:Ramsey-trees} for a concise description of previous and our results.

\begin{table*}[h]
\centering
\small
\begin{tabular}{ |l|l|l|l| }
\hline
Family & Reference & Number of trees & Distortion \\
\hlineB{2}
\multirow{2}{*}{General metrics} & \cite{MN07} & $O(\alpha\cdot n^{1/\alpha})$ & $O(\alpha)$ \\
\cline{2-4}
& {\bf New} & $k$ & $O(n^{1/k}\cdot\log^{1-1/k}n)$  \\
\hlineB{2}
\multirow{2}{*}{Planar \& doubling metrics} & {\bf New} & $k$ & $\Omega(n^{1/k})$  \\
\cline{2-4}
& {\bf New}& $n^{\Omega(1/(\alpha\log \alpha))}$ & $\alpha$\\
\hlineB{2}
\end{tabular}
\caption{Previous and our results on Ramsey trees for general, planar and doubling metrics.}
\label{table:Ramsey-trees}
\end{table*}

\subsection{Overview of Techniques}

{\bf Tree cover for doubling metrics.} The standard way to construct a ($1+\epsilon$)-spanner for doubling metrics is along the following lines \cite{GGN04,HM06}: Choose a hierarchical collection of $2^i$-nets (see \sectionref{sec:tree-cover-doubling} for definitions), and assign every vertex to its nearest net-point  at the level $i$ when it first leaves the net hierarchy; this creates a {\em net tree}. Then additional edges are added to other net-points within distance $\approx 2^i/\epsilon$. This spanner cannot be decomposed into a few trees as low distortion paths for the pairs use both the net tree and the additional edges.

We use a different approach for constructing a spanner, so that it can be decomposed to trees; We first partition the hierarchical net into a small number of well-separated sub-nets (so that in level $i$, distances between points in the sub-net are at least $2^i/\epsilon$). Then construct a tree for each hierarchical sub-net, by iterative clustering around the sub-net points in a bottom-up manner. In order to control the radius increase caused by the clustering of lower level sub-nets, we also take sufficiently large gaps between consecutive levels used in the same tree.\\
\noindent
{\bf Tree cover for minor-free graphs.} We apply the path separators of \cite{AG06}, asserting that graphs excluding a fixed minor have a separator consisting of $O(1)$ shortest paths (see \sectionref{sec:tree-cover-planar} for the precise definitions). Adding for each point $O(1/\epsilon)$ edges to each shortest path guarantees small distortion for all separated pairs \cite{T04,K02}. However, since we desire trees, we can allow each point to add only 1 edge (per tree) to the path separator. Using a simple randomized algorithm to choose these edge connections, we show that w.h.p. all pairs will have a low distortion tree.\\
\noindent
{\bf Hierarchical partitions.} 
We construct a collection of HST spaces (a special type of ultra-metric spaces, see Section \ref{sec:cover-vs-partition}) via a hierarchical probabilistic partitions similarly to \cite{CGMZ05}. Yet instead of using the basic probabilistic partitions (e.g. \cite{bartal96}) we use the probabilistic partitions of \cite{ABN11}, which have two main strong properties: The padding of the partition can be set as a parameter, which may also be a constant depending on distortion $\alpha$; The partitions are local, i.e. the probability of being padded is not affected by the structure of the clusters that are far enough. This allows showing that intersecting a bundle of such independent partitions achieves a good tradeoff when the number of scale levels is small. Combining these with the idea of bottom-up union of clusters similar to that of \cite{CGMZ05}, we are able to construct hierarchies with diameters of clusters decreasing by a \emph{constant factor} while letting the padding parameter depend on $\alpha$, obtaining more general and improved bounds.


\subsection{Related Work}
We note that Charikar et al. \cite{CCGGP98} studied a related question of bounding the number of trees sufficient for probabilistic embedding.  The result they obtain implies an exponentially weaker cover size than those that follow from \cite{bartal96} and from our construction.

In \cite{GHR06}, Gupta et al. considered a stronger version of tree covers (stronger than Ramsey tree covers), which they used to devise an oblivious algorithm for network design problems. We note that the lower bounds given in this paper show that the bound they get using this method is almost tight, even for doubling or planar metrics.

In the context of spanning trees, the problem of computing a spanning tree with low average stretch was first studied by Alon et al. \cite{AKPW95}.
Following Elkin et al. \cite{EEST05}, Abraham et al. \cite{ABN07,AN12} obtained a nearly tight $O(\log n\cdot\log\log n)$ bound.

\subsection{Organization}
In \sectionref{sec:ramsey-cover-upper} we show (spanning) Ramsey tree cover for general metrics with few trees.
We show tree covers with distortion $1+\epsilon$: for doubling metrics in \sectionref{sec:tree-cover-doubling}, and for planar (more generally minor-free) metrics in \sectionref{sec:tree-cover-planar}.
In \sectionref{sec:cover-vs-partition} we provide our construction of small size hierarchical family for any distortion $\alpha >2$ for doubling metrics. In addition, we describe the connection between hierarchical partitions and tree covers, and derive results on tree covers for planar (minor-free) metrics.
Our lower bounds on (Ramsey) tree covers appear in \sectionref{sec:lower}. We conclude in \sectionref{sec:open} with some open problems.

\section{Ramsey Tree Covers for General Metrics with Few Trees}\label{sec:ramsey-cover-upper}
In this section we show a deterministic  Ramsey tree cover construction for general metrics. Unlike previous works, we build a cover with a small (possible constant) number of trees.
\begin{theorem}
\label{thm:ramsey_upper_bound}
For any $n$-point metric $(X,d)$ and any $k \geq 1$, there is a deterministic algorithm that constructs a Ramsey tree cover for $X$ of size $k$ with distortion $O(n^{1/k} \cdot(\log n)^{1-1/k})$.
\end{theorem}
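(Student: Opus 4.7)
The plan is to iterate a deterministic Ramsey extraction $k-1$ times with a carefully chosen distortion parameter $\alpha$, and use a $k$-th tree to clean up the small residual set. The starting ingredient is the deterministic Ramsey lemma of Bartal \cite{Bar11}: for any $m$-point metric $(M, d_M)$ and any $\beta \geq 1$, one can compute a subset $S \subseteq M$ with $|S| \geq m^{1-1/\beta}$ together with a dominating ultrametric on $M$ of distortion $O(\beta)$ on every pair in $S \times M$.

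I will set $\alpha := C \cdot n^{1/k}(\log n)^{1-1/k}$ for a sufficiently large constant $C$, define $X_0 := X$, and for each $i = 1, \ldots, k-1$ apply the Ramsey lemma with parameter $\alpha$ to the metric induced on $X_{i-1}$; this yields a subset $S_i \subseteq X_{i-1}$ and an ultrametric on $X_{i-1}$, which I then extend to a dominating tree $T_i$ on all of $X$ by attaching each removed point $w \in X \setminus X_{i-1}$ to its nearest point in $X_{i-1}$ via an edge of the corresponding weight, and set $X_i := X_{i-1} \setminus S_i$. A routine triangle-inequality argument, exploiting the nearest-neighbor property, shows that this extension preserves the distortion bound (up to a constant factor) on all pairs in $S_i \times X$, not merely on $S_i \times X_{i-1}$.

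The key quantitative step is that the residual set shrinks rapidly: writing $n_i := |X_i|$, the identity $n_i = n_{i-1}(1 - n_{i-1}^{-1/\alpha})$ together with $1 - e^{-x} \leq x$ and $n_{i-1} \leq n$ gives $n_i \leq n_{i-1} \cdot (\log n)/\alpha$; iterating yields $n_{k-1} \leq n \cdot ((\log n)/\alpha)^{k-1}$, which by the choice of $\alpha$ evaluates to $O(\alpha)$.

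For the final tree $T_k$, let $m := n_{k-1} = O(\alpha)$. Embed $X_{k-1}$ into a dominating ultrametric of distortion at most $m - 1 = O(\alpha)$ (e.g.,\ the scaled MST-bottleneck ultrametric, which is a standard construction), and extend to $X$ by the same nearest-neighbor attachment; this produces $O(\alpha)$ distortion on all pairs in $X_{k-1} \times X$. Designating $T_i$ as the home tree of every $u \in S_i$ for $i < k$, and $T_k$ as the home tree of every $u \in X_{k-1}$, yields a Ramsey $(O(\alpha), k)$-tree cover as required. The main obstacle is twofold: calibrating $\alpha$ so that the coarse per-step shrinkage $n_i \leq n_{i-1} \log n/\alpha$ terminates at $n_{k-1} = O(\alpha)$ after exactly $k-1$ iterations; and verifying that each nearest-neighbor-extended tree $T_i$ retains $O(\alpha)$ distortion on all of $S_i \times X$ rather than just on $S_i \times X_{i-1}$.
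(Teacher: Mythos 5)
Your proposal follows essentially the same route as the paper: iteratively peel off a large Ramsey subset $k-1$ times with a carefully tuned parameter $\alpha$, sweep the small residual into a $k$-th tree, and choose $\alpha = n^{1/k}(\log n)^{1-1/k}$ so that the residual set shrinks to size $O(\alpha)$ after $k-1$ steps; the per-step shrinkage bound and the final calibration are the same as in the paper. The one place where you do extra work is the extension step: the paper instead invokes the stronger form of the deterministic Ramsey theorem (Theorem~\ref{thm:ramsey_part_tree}, from \cite{Bar11,ACEFN18}), which takes a prescribed subset $S\subseteq X$ and directly returns $Z\subseteq S$ together with an ultrametric on \emph{all of} $X$ that has distortion $O(\alpha)$ on $Z\times X$, so no nearest-neighbor extension is ever needed and the resulting trees are genuine ultrametrics. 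Your nearest-neighbor extension argument is correct as stated (for $w\notin X_{i-1}$ with nearest $w'\in X_{i-1}$ and $u\in S_i$, one has $d(w,w')\le d(u,w)$ and $d(u,w')\le 2d(u,w)$, so the distortion only degrades by a constant factor), but it yields trees that are no longer ultrametrics, which is a harmless but slightly weaker conclusion than the paper's.
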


Our deterministic construction follows directly from the following theorem on deterministic Ramsey embedding into a tree metric that was presented in Bartal \cite{Bar11} and in Abraham et al.  \cite{ACEFN18} (alternatively, a randomized construction can be based on \cite{MN07}).

\begin{theorem}{\cite{Bar11, ACEFN18}}
\label{thm:ramsey_part_tree}
Let $(X,d)$ be a metric space, fix any subset $S\subseteq X$, and let $\alpha\geq 1$ be a parameter. There is a deterministic algorithm that finds a subset $Z \subseteq S$, of size $|Z| \geq |S|^{1-\frac{1}{\alpha}}$, and an embedding $f$ of $X$ into an ultrametric $T$ with distortion $O(\alpha)$ for any pair $(u,v) \in  Z \times X$.
\end{theorem}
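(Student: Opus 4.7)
The plan is to construct the ultrametric $T$ on $X$ and the Ramsey subset $Z\subseteq S$ simultaneously, via a top-down hierarchical partition of $X$ that derandomizes the probabilistic Ramsey partition of \cite{MN07}. Starting with $X$ as the root of $T$ (labeled $\Theta(\diam(X))$), at each level $\ell$ we partition every current cluster of diameter $\Delta_\ell$ into sub-clusters of diameter at most $\Delta_\ell/2$, attach these as children in $T$, and recurse. The geometric decrease of labels makes $T$ a $2$-HST, hence an ultrametric. A point $z\in S$ is said to be \emph{padded at level $\ell$} if the ball $B_X(z,\Delta_\ell/(c\alpha))$, for an absolute constant $c$, lies entirely inside $z$'s own sub-cluster at that level; we set $Z$ to be the points of $S$ that are padded at every level.

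The distortion bound is then immediate. For $z\in Z$ and $v\in X$, let $\ell$ be the deepest level at which $z$ and $v$ still share a cluster (of diameter $\Delta_\ell$). Padding at level $\ell+1$ forces $d_X(z,v)\geq \Delta_{\ell+1}/(c\alpha)=\Theta(\Delta_\ell/\alpha)$, while $d_T(z,v)\leq 2\Delta_\ell$, giving distortion $O(\alpha)$. The whole argument thus reduces to producing a hierarchy for which the set of globally padded points in $S$ has cardinality at least $|S|^{1-1/\alpha}$.

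The single-scale step is a deterministic region-growing procedure on each cluster $C$, following \cite{Bar11, ACEFN18}: iteratively pick a center $x\in C$, and choose the cluster's outer radius from an interval of width $\Delta_\ell/(c\alpha)$ so as to minimize the total ``badness charge'' inflicted on $S\cap C$, where the charge assigned to each $z\in S\cap C$ at this step is $\log(|B_X(z,r_{\rm out})\cap C|/|B_X(z,r_{\rm in})\cap C|)$, the log-ratio of counts between $z$'s outer and inner (padded) radii around this boundary. The key combinatorial fact is that when one sums these charges across all the $O(\log n)$ levels, the nested ball counts around $z$ telescope, so each point accumulates total charge $w_z = O(\log n)$.

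Set $p_z := e^{-w_z/\alpha}$; the greedy selection of radii guarantees that the number of globally padded points is at least $\sum_{z\in S} p_z$. By convexity of $e^{-t}$ (Jensen's inequality), $\sum_z p_z \geq |S|\cdot e^{-\bar w/\alpha} \geq |S|\cdot n^{-O(1/\alpha)}$, which is at least $|S|^{1-1/\alpha}$ after absorbing the constant into $\alpha$. The main obstacle is verifying that the greedy radius rule indeed achieves this pointwise charge accounting \emph{deterministically}, matching the randomized guarantee of \cite{MN07}: concretely, one must coordinate the radii chosen at different scales so that a point's outer radius at level $\ell+1$ is (up to a constant) its inner padded radius at level $\ell$, in order for the telescoping over levels to really yield $\sum_\ell w_z \leq O(\log n)$. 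This step is the heart of the derandomization and the most delicate part of the argument.
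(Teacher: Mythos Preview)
The paper does not prove this theorem; it is quoted verbatim as a black-box result from \cite{Bar11, ACEFN18} and used without proof in the derivation of \theoremref{thm:ramsey_upper_bound}. So there is no ``paper's own proof'' to compare against. Your outline is in the spirit of the cited derandomizations (top-down hierarchical partition, greedy region-growing with a log-ratio charging scheme, $Z$ defined as the set of everywhere-padded points of $S$), and the distortion argument via padding is correct.

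There is, however, a concrete error in the size bound. Your charges count points of $X$ (the log-ratios are of $|B_X(z,\cdot)\cap C|$), so the telescoping gives $\sum_\ell w_z^{(\ell)} = O(\log n)$ and hence $|Z|\ge |S|\cdot n^{-O(1/\alpha)}$. You then assert this is at least $|S|^{1-1/\alpha}$ ``after absorbing the constant''. The inequality goes the wrong way: since $|S|\le n$ we have $|S|^{-1/\alpha}\ge n^{-1/\alpha}$, so
\[
|S|\cdot n^{-c/\alpha}\;\le\;|S|\cdot|S|^{-c/\alpha}\;=\;|S|^{1-c/\alpha},
\]
and no rescaling of $\alpha$ repairs this when $|S|\ll n$ (which is precisely the regime the paper needs in the later iterations $S_2,S_3,\ldots$). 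The fix, and what the cited constructions actually do, is to run the charging relative to $S$: the log-ratios should be of $|B_X(z,\cdot)\cap C\cap S|$, so that the telescoping yields $\sum_\ell w_z^{(\ell)}\le O(\log|S|)$ and Jensen gives $|Z|\ge |S|\cdot|S|^{-O(1/\alpha)}=|S|^{1-O(1/\alpha)}$. This also means the greedy radius selection must optimize the $S$-restricted charge, not the $X$-charge; once you make that change the rest of your sketch goes through.
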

\begin{proof}[Proof of \theoremref{thm:ramsey_upper_bound}]

Let $S_1=X$. For $i=1,\dots,k-1$ iteratively apply the algorithm of Theorem \ref{thm:ramsey_part_tree} on the subset $S_i$ with parameter $\alpha$ (to be determined later), and obtain trees $T_1,\dots,T_{k-1}$. The last tree $T_k$ will be constructed separately. Let $Z_i\subseteq S_i$ be the set of size at least $|S_i|^{1-1/\alpha}$ guaranteed by Theorem \ref{thm:ramsey_upper_bound}, and define $S_{i+1}=S_i\setminus Z_i$. Note that every point $x\in X\setminus S_k$ has a tree with distortion $O(\alpha)$  for all pairs in $\{x\}\times X$.
For each $i \geq 1$, we have $|S_{i+1}|=|S_i|-|Z_i| \leq |S_i|\left(1-n^{-1/\alpha}\right)$. Therefore, after $k-1$ iterations we have $|S_k| \leq n\left(1-n^{-1/\alpha}\right)^{k-1}$. We can embed the metric
$S_k$ into an ultrametric with distortion $|S_k|-1$ by the embedding of \cite{BLMN03,HM06}. This embedding can be extended to all of $X$, with distortion $O(|S_{k}|)$ for pairs in $S_k\times X$ by \cite[Lemma 4.1]{MN07}. Therefore, $\alpha$ should be chosen so that $n \cdot\left(1-n^{-1/\alpha}\right)^{k-1} \leq \alpha$. Using the inequality $e^x \geq 1+x$ for all $x\in \mathbb{R}$ we have:
$n \cdot\left(1-n^{-1/\alpha}\right)^{k-1} =n\cdot \left(1-e^{-\frac{\ln n}{\alpha}}\right)^{k-1} \leq n \cdot \left( \frac{\ln n}{\alpha}\right)^{k-1}.$
Taking $\alpha = n^{1/k} \cdot (\ln n)^{\left(1-\frac{1}{k}\right)}$ gives a Ramsey tree cover with distortion $O(\alpha)$.

\end{proof}

\section{$(1+\epsilon)$-Distortion Tree Covers for Doubling Metrics}\label{sec:tree-cover-doubling}

In this section we devise a tree cover for doubling metrics with distortion arbitrarily close to 1.
Let $(X,d)$ be a metric with doubling constant $\lambda$, and fix $0<\epsilon<1/8$.

\begin{definition}
An $r$-net $N\subseteq X$ is a set satisfying: 1) For every $x,y\in N$, $d(x,y)>r$, and 2) For every $u\in X$ there exists $x\in N$ with $d(x,u)\le r$. We say that a collection $\{N_i\}$ of $2^i$-nets is {\em hierarchical} if $N_{i+1}\subseteq N_i$.
\end{definition}
It is well-known that a simple greedy algorithm can construct (hierarchical) nets. Also, it is known that the size of an $r$-net of a ball of radius $R$ is bounded by $\lambda^{O(\log(R/r))}$ (see e.g. \cite{GKL03}).

Let $\{N_i\}$ be a hierarchical collection of $2^i$-nets of $X$. (It suffices to take the indices $i$ from the range $[\log(\epsilon\delta),\log\Delta]$ where $\delta=\min_{x\neq y\in X}\{d(x,y)\}$ and  $\Delta=\max_{x,y\in X}\{d(x,y)\}$.)
\begin{claim}
There is a partition of $N_i$ to $t=\lambda^{O(\log(1/\epsilon))}$ sets $N_{i1},\dots,N_{it}$, so that for every $x,y\in N_{ij}$, $d(x,y)\ge 6/\epsilon\cdot 2^i$. It is also hierarchical: if $x\in N_{ij}$ then $x\in N_{i'j}$ for every $i'<i$.
\end{claim}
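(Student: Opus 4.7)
The plan is to view the partition at each level as a proper vertex coloring of a ``conflict graph'' and to build these colorings top-down, so that colors assigned at a high level are automatically inherited by all lower levels. For each $i$, define the conflict graph $H_i$ on vertex set $N_i$ by connecting $x,y$ whenever $d(x,y) < 6/\epsilon \cdot 2^i$. A proper coloring of $H_i$ with color classes $N_{i1},\ldots,N_{it}$ is exactly the desired partition of $N_i$, since any two points sharing a color are non-adjacent in $H_i$ and hence at distance $\ge 6/\epsilon\cdot 2^i$.

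The first step is to bound the maximum degree of $H_i$. Fix $x\in N_i$ and consider the ball $B=B(x,6/\epsilon\cdot 2^i)$. Iterating the doubling property, $B$ can be covered by $\lambda^{O(\log(1/\epsilon))}$ balls of radius $2^i/2$, each of which contains at most one point of $N_i$ because distinct net-points are at distance $>2^i$. Hence $\Delta(H_i)\le t-1$, where $t=\lambda^{O(\log(1/\epsilon))}$.

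Next, I construct the coloring by descending induction on $i$, starting from the topmost level $i_{\max}=\lceil\log\Delta\rceil$. At the top level, greedily color $N_{i_{\max}}$ in any order, picking for each point the smallest color not used by any already-colored neighbor in $H_{i_{\max}}$; by the degree bound this uses at most $t$ colors. For the inductive step, assume $N_{i+1}$ has already been colored with a valid partition $\{N_{(i+1)j}\}_{j=1}^t$. Every $x\in N_{i+1}\subseteq N_i$ keeps its color, ensuring the hierarchy condition automatically. For each new point $x\in N_i\setminus N_{i+1}$, examine its (already-colored) neighbors in $H_i$ and assign $x$ the smallest color not appearing among them. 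These new points are processed one at a time, so later new points also treat earlier new points as ``already colored.'' The degree bound in $H_i$ guarantees such a color exists within the $t$ colors.

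Finally I would verify the two required properties. Separation: if $x,y\in N_{ij}$, let $i^*$ be the largest level at which both $x$ and $y$ had been given color $j$. Then at the step they were simultaneously color-$j$, they were non-adjacent in $H_{i^*}$, so $d(x,y)\ge 6/\epsilon\cdot 2^{i^*}\ge 6/\epsilon\cdot 2^i$. Hierarchy: it follows from the fact that colors are never changed once assigned. The main subtlety, and the step to be careful about, is confirming that colors assigned at level $i+1$ remain valid in $H_i$ even though $H_i$ has a strictly larger edge set than the subgraph of $H_{i+1}$; this is fine precisely because previously-separated points are separated by $\ge 6/\epsilon\cdot 2^{i+1} > 6/\epsilon\cdot 2^i$, so no old edge appears between two same-colored old points. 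All the real work is contained in the degree bound; everything else follows from greedy coloring bookkeeping.
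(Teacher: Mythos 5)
Your proof is correct and takes essentially the same approach as the paper: a greedy construction combined with the doubling/packing bound that a ball of radius $6/\epsilon\cdot 2^i$ contains fewer than $t=\lambda^{O(\log(1/\epsilon))}$ points of $N_i$. The only cosmetic difference is that you phrase it as first-fit greedy coloring of a conflict graph (assigning each new point the smallest available color), whereas the paper fills up the color classes $N_{i1},\dots,N_{it}$ one at a time; the underlying argument and bound are identical.
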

\begin{proof}
First place in $N_{ij}$ all the points of $N_{(i+1)j}$ for each $j$, and denote $N'_i=N_i\setminus (\bigcup_j N_{(i+1)j})$. Next, for $j=1,2,...,t$ complete $N_{ij}$ by choosing greedily from the points remaining in $N'_i\setminus(N_{i1}\cup\dots\cup N_{i(j-1)})$. Since for any $x\in N_i$ the ball of radius $6/\epsilon\cdot 2^i$ contains less than $t$ net points of $N_i$, we will surely pick $x$ to some $N_{ij}$ in some iteration $j\le t$.
\end{proof}
\noindent
{\bf Construction of trees.}
Assume w.l.o.g that $\log(1/\epsilon)$ is an integer, we will construct $t\cdot\log(1/\epsilon)$ trees (in fact, forests). The tree $T_{j,p}$ is indexed by the pair $(j,p)$ with $1\le j\le t$ and $0\le p< \log(1/\epsilon)$. Fix $j$ and $p$, we now describe how to build $T_{j,p}$. Let $I_p=\{i~:~i\equiv p~({\rm mod}~\log(1/\epsilon))\}$. Initially all points in $X$ are unclustered. We go over all $i\in I_p$ (from small to large in order), and for every $x\in N_{ij}$ we add an edge from $x$ to every unclustered point $y \in X$ satisfying $d(x,y)<3/\epsilon\cdot 2^i$, of weight $d(x,y)$. These points connected to $x$ are now clustered. (The center $x$ is not considered clustered.)

\begin{observation}\label{Observ}

\begin{enumerate}
\item[{ a.}] No point $x\in N_{ij}$ is clustered when iteration $i$ is complete.
\item[b.] Every $u\in X$ can be clustered by at most 1 point.
\item[c.] If $C_x$ is the connected component created by the clustering of $x\in N_{ij}$ at level $i\in I_p$, then $\diam(C_x)\le 8/\epsilon\cdot 2^i$.
\end{enumerate}
\end{observation}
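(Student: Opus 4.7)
The plan is to verify each of the three items in turn, with the main challenge being a clean inductive diameter bound for part (c) that exploits the gap $L := \log(1/\epsilon)$ between consecutive levels used in the same tree.

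For part (a), I would exploit the hierarchy of the sub-nets. Suppose $x \in N_{ij}$ and consider any iteration $i' \in I_p$ with $i' \le i$ that processes tree $T_{j,p}$. By the hierarchical property, $x \in N_{i'j}$ as well. At iteration $i'$, the only centers that add edges are points of $N_{i'j}$. If $i' = i$, then $x$ itself is a center and is explicitly not clustered by the construction rule. If $i' < i$, any other candidate center $y \in N_{i'j}$ satisfies $d(x,y) \ge 6/\epsilon \cdot 2^{i'}$ by the separation property of the sub-net, which strictly exceeds the clustering radius $3/\epsilon \cdot 2^{i'}$, so $x$ is not connected to $y$. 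Part (b) is immediate: the construction only adds an edge from a center to an unclustered point and then marks that point as clustered; since a point can make this transition only once, it is clustered by at most one center.

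For part (c), I would argue by induction on the levels in $I_p$. Write $L = \log(1/\epsilon)$, so that consecutive levels in $I_p$ differ by exactly $L$ and $2^{i-L} = \epsilon \cdot 2^i$. Fix $x \in N_{ij}$ and consider any $y$ directly attached to $x$ at iteration $i$ (so $d(x,y) < 3/\epsilon \cdot 2^i$). By part (a), $y \notin N_{ij}$, so if $y$ was ever a center of this tree construction, it was a center at some level $i^* \in I_p$ with $i^* \le i - L$. In that case, by the inductive hypothesis, the component $C_y$ formed when $y$ was a center has $\diam(C_y) \le 8/\epsilon \cdot 2^{i^*} \le 8/\epsilon \cdot 2^{i-L} = 8 \cdot 2^i$; otherwise $C_y = \{y\}$ and the bound is trivial. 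Hence for any $z,w \in C_x$, routing through the corresponding direct children $y_z,y_w$ of $x$,
\[
d(z,w) \le \diam(C_{y_z}) + d(y_z,x) + d(x,y_w) + \diam(C_{y_w}) \le 2 \cdot 8 \cdot 2^i + 2 \cdot \tfrac{3}{\epsilon} \cdot 2^i = \bigl(16 + \tfrac{6}{\epsilon}\bigr)\cdot 2^i.
\]
The assumption $\epsilon < 1/8$ gives $16 \le 2/\epsilon$, so this is at most $8/\epsilon \cdot 2^i$, completing the induction.

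The only step that requires real care is the inductive computation in (c): one must be precise about why a point $y$ just absorbed into $x$'s cluster carries only a level-$(i-L)$ (or smaller) subcluster. This is exactly what part (a) delivers, and it is the reason the argument closes with the specific constant $8/\epsilon$ under the hypothesis $\epsilon < 1/8$; the $\log(1/\epsilon)$-gap between consecutive levels in $I_p$ is what makes the geometric-series loss absorbable into this constant.
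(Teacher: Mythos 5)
Parts (a) and (c) follow the paper's argument closely: (a) uses the hierarchy of sub-nets plus the $6/\epsilon\cdot 2^{i'}$ separation, and (c) is the same induction with the same constants (your bookkeeping $16+6/\epsilon\le 8/\epsilon$ for $\epsilon\le 1/8$ is exactly the paper's $2(3/\epsilon+8)<8/\epsilon$).

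There is a genuine gap in part (b). You prove only the tautology that the sequential process marks each point clustered at most once, so each point receives at most one edge. But the paper's (b) establishes a geometric uniqueness: at level $i$, if $x\in N_{ij}$ satisfies $d(x,u)\le 3/\epsilon\cdot 2^i$, then any other $x'\in N_{ij}$ has $d(x',u)>3/\epsilon\cdot 2^i$ (via the $6/\epsilon\cdot 2^i$ separation of the sub-net and the triangle inequality), so at most one center of $N_{ij}$ could \emph{ever} cluster $u$ at that level. This stronger statement is what ``can be clustered by at most 1 point'' means here, and it is essential later: in the proof of Claim~\ref{claim:clus}, the argument that $z$ lies within $3/\epsilon\cdot 2^i$ of $x$ and is unclustered is followed by ``so $x$ will cluster $z$'' \emph{because} by Observation~\ref{Observ}(b) no other center can claim $z$. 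Your version of (b) does not rule out some other center $x'$ being processed first and clustering $z$, so it cannot support that step. You need to reproduce the separation-plus-triangle-inequality argument.

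A small cosmetic issue in (a): for the case $i'=i$, you only note that $x$ is not clustered by \emph{itself}, but you do not address another center $y\ne x$, $y\in N_{ij}$, potentially clustering $x$. The same $6/\epsilon\cdot 2^{i}$ separation handles it; it is just that your case split $i'=i$ vs.\ $i'<i$ is unnecessary and leaves this one sub-case implicit.
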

\begin{proof}

\begin{enumerate}
\item[a.] Since the nets are hierarchical, $x\in N_{i'j}$ for all $i'\le i$, $i'\in I_p$. Every other point $y\in N_{i'j}$ is at least $6/\epsilon\cdot 2^{i'}$ far away from $x$, thus $y$ will not cluster $x$.
\item [b.] We need to show that if $u$ becomes clustered at level $i$, then there exists a single $x\in N_{ij}$ who clustered it. Note that by definition $d(x,u)\le3/\epsilon\cdot 2^i$, while the separation of the sub-net $N_{ij}$ guarantees that any $x'\in N_{ij}$ has $d(x,x')>6/\epsilon\cdot 2^i$. The triangle inequality implies that $d(x',u)>3/\epsilon\cdot 2^i$, as required.

\item[c.] This follows by a simple induction, every cluster center at level $i'<i$ with $i'\in I_p$ has $i'\le i-\log(1/\epsilon)$, and so $2^{i'}\le \epsilon\cdot 2^i$. Thus by induction its diameter is at most $8\cdot 2^i$. The distance from $x$ to any point in $C_x$ is at most $3/\epsilon\cdot 2^i+8\cdot 2^i$ (the clustering range of $x$ plus the diameter of the previous level cluster). We get that $\diam(C_x)\le 2\cdot(3/\epsilon\cdot 2^i+ 8\cdot 2^i)<8/\epsilon\cdot 2^i$ whenever $\epsilon<1/8$.
\end{enumerate}
\end{proof}
\begin{claim}
When the process completes we have a forest.
\end{claim}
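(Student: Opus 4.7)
The plan is to orient each added edge from the clustered point to the center that clustered it, and to show that this directed structure is a well-defined, acyclic ``parent'' function, which by a short functional-graph argument rules out undirected cycles as well.

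First, I would use Observation $b$ directly: each point $u \in X$ is clustered by at most one center $x \in N_{ij}$ during the entire run. Hence $u$ has at most one outgoing edge under this orientation, namely the edge to the unique $x$ that clustered it (if any). Roots of the forest will be points that are never clustered, such as the final centers remaining at the largest level in $I_p$.

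Next, I would define $\mathrm{level}(u) := i$ if $u$ was clustered during iteration $i \in I_p$, and $\mathrm{level}(u) := +\infty$ otherwise. The key step is to show that if $u$ is clustered by $x$ at level $i$, then $\mathrm{level}(x) > i$. Indeed, $x \in N_{ij}$, and by the hierarchical property of the sub-nets, $x \in N_{i'j}$ for every $i' \le i$ with $i' \in I_p$. By Observation $a$ applied at each such $i'$, $x$ is never clustered at any iteration $i' \le i$. Thus if $x$ is clustered at all, it must be at some strictly larger level, i.e.\ $\mathrm{level}(x) > \mathrm{level}(u)$.

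Finally, I would conclude as follows. Following parent pointers from any vertex gives a sequence $u_0, u_1, u_2, \ldots$ with strictly increasing levels, so the sequence cannot revisit a vertex and must terminate at some root. In particular, the directed parent graph has no directed cycles. Since each vertex has out-degree at most $1$, any undirected cycle would force some vertex to be the ``outgoing'' endpoint of two distinct edges of the cycle (a pigeonhole on the edges of the cycle versus outgoing slots of its vertices), contradicting the out-degree bound; equivalently, every undirected cycle in a graph of out-degree at most one is a directed cycle, which we have just ruled out. Therefore $T_{j,p}$ contains no cycles and is a forest. There is no real obstacle here beyond carefully invoking the right parts of Observation~\ref{Observ}; the level-monotonicity of the parent relation is what does the work.
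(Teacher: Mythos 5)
Your proof is correct and takes essentially the same approach as the paper's: the paper orients each added edge from a point to the (unique, by Observation~\ref{Observ}(b)) center that clusters it and notes that this edge always points to a currently-unclustered vertex, i.e.\ one with a strictly later ``time stamp.'' Your ``level'' function is exactly that time stamp, and your invocation of Observation~\ref{Observ}(a) to establish level-monotonicity, followed by the out-degree-at-most-one argument to exclude undirected cycles, is just a more explicit writeup of the same reasoning.
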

\begin{proof}
By Observation \ref{Observ}({$b$}) every point $u$ adds at most a single edge to $T_{j,p}$, at the time it becomes clustered. As $u$ adds this edge to an unclustered point, it cannot close a cycle. (More formally, if we give each vertex a time stamp which is the time it becomes clustered, then the single edge every vertex adds is to a vertex with a higher time stamp.)
\end{proof}

\begin{claim}\label{claim:clus}
Let $C_x$ be the connected component created when we clustered points to $x$ at some level $i\in I_p$. Then for every point $y\in B(x,2/\epsilon\cdot 2^i)$ we have $d_{C_x}(x,y)\le d(x,y)+2^{i+4}$.
\end{claim}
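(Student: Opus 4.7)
The plan is to prove the claim in two stages: first, to show that the hypothesis $y\in B(x,2/\epsilon\cdot 2^i)$ forces $y$ to actually lie in $C_x$ after iteration $i$; then, to bound the tree length of the $x$-to-$y$ path via Observation~\ref{Observ}($c$).

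For the first stage I will track the root $z$ of the subtree containing $y$ at the start of iteration $i$. Either $y$ has not yet been clustered (so $z=y$), or $z\in N_{i'j}$ is a net point from some earlier level $i'\in I_p$ with $i'\le i-\log(1/\epsilon)$. In the first case $d(x,y)\le 2/\epsilon\cdot 2^i<3/\epsilon\cdot 2^i$, and the separation of $N_{ij}$ underlying Observation~\ref{Observ}($b$) forces $x$ to be the unique net point in $N_{ij}$ eligible to cluster $y$. In the second case the radius estimate inside the proof of Observation~\ref{Observ}($c$) gives $d(y,z)\le 3/\epsilon\cdot 2^{i'}+8\cdot 2^{i'}\le 4\cdot 2^i$, so the triangle inequality yields $d(x,z)\le 2/\epsilon\cdot 2^i+4\cdot 2^i\le 3/\epsilon\cdot 2^i$ (using $\epsilon<1/8$). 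The separation of $N_{ij}$ now forces either $z=x$, in which case $y$ was already in $C_x$ before iteration $i$, or $z\notin N_{ij}$, in which case Observation~\ref{Observ}($b$) makes $x$ cluster $z$ and attaches $C_z$ to $C_x$, so again $y\in C_x$.

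For the second stage, let $z_1$ be the unique child of $x$ on the $C_x$-path from $x$ to $y$, so that $d_{C_x}(x,y)=d(x,z_1)+d_{C_{z_1}}(z_1,y)$, where the second summand vanishes when $z_1=y$. When $z_1\neq y$, $z_1\in N_{i_1 j}$ with $i_1\le i-\log(1/\epsilon)$, so Observation~\ref{Observ}($c$) gives $d_{C_{z_1}}(z_1,y)\le\diam(C_{z_1})\le 8/\epsilon\cdot 2^{i_1}\le 8\cdot 2^i$, and since edge weights equal metric distances, also $d(z_1,y)\le 8\cdot 2^i$. The triangle inequality then gives $d(x,z_1)\le d(x,y)+8\cdot 2^i$, and summing produces $d_{C_x}(x,y)\le d(x,y)+16\cdot 2^i=d(x,y)+2^{i+4}$.

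The main obstacle I anticipate is the first stage: one must show that the current root $z$ of $y$'s subtree is genuinely within $x$'s clustering radius. The numerical tension is that $z$ can drift as far as $\approx 4\cdot 2^i$ from $y$ due to accumulated lower-level radii, so the hypothesis threshold of $2/\epsilon\cdot 2^i$ is exactly what leaves $x$ enough slack to reach $z$ within $3/\epsilon\cdot 2^i$; this is where the assumption $\epsilon<1/8$, reused from Observation~\ref{Observ}($c$), is essential. Once $y\in C_x$ is established, the second stage reduces to a single application of the diameter bound plus a triangle inequality.
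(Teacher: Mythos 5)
Your proposal is correct and follows essentially the same route as the paper: identify the root $z$ of the pre-existing component containing $y$, show $d(x,z)\le 3/\epsilon\cdot 2^i$ so that $x$ attaches that component (or already owns it), then bound $d_{C_x}(x,y)$ by $d(x,z)+d_{C_z}(z,y)$ and use the diameter bound from Observation~\ref{Observ}($c$) together with the triangle inequality. The paper folds the case $z=x$ into its formula implicitly and uses $\diam(C_z)\le 8\cdot 2^i$ where you use the slightly tighter radius bound $4\cdot 2^i$, but these are only cosmetic differences.
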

\begin{proof}
If $y$ was unclustered when creating $C_x$ then $d_{C_x}(x,y)= d(x,y)$ by definition. Otherwise, let $z$ be the (unique) unclustered point in $C_z$, the connected component containing $y$ before executing the clustering of level $i$. Let $i'<i$ be the level in which $z$ created $C_z$, and by Observation \ref{Observ}($c$) we have $\diam(C_z)\le 8/\epsilon\cdot 2^{i'}\le 8\cdot 2^i$ (recall $i'\le i-\log(1/\epsilon)$ since $i'\in I_p$). As $8\le 1/\epsilon$ it follows that $d(x,z)\le d(x,y)+d(y,z)\le 2/\epsilon\cdot 2^i+8\cdot 2^i\le 3/\epsilon\cdot 2^i$, so
$x$ will cluster $z$ (recall that by Observation \ref{Observ}($b$) no other center can cluster $z$). Furthermore,
$d_{C_x}(x,y)= d_{C_x}(x,z)+d_{C_z}(z,y)= d(x,z)+d_{C_z}(z,y)\le d(x,y)+2d_{C_z}(z,y)\le d(x,y)+ 2^{i+4}$.

\end{proof}
\begin{lemma}
For every $u,v\in X$, there is a tree $T=T_{j,p}$ so that $d_T(u,v)\le(1+O(\epsilon))\cdot d(u,v)$.
\end{lemma}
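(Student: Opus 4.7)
The plan is to find, for each pair $u,v\in X$, one specific tree $T=T_{j,p}$ in which both endpoints get absorbed into a single cluster $C_x$ centered at a net-point $x$ close to both of them. With $d=d(u,v)$, I would take the unique integer $i$ with $\epsilon d\le 2^i<2\epsilon d$, pick $x\in N_i$ closest to $u$ (so $d(x,u)\le 2^i$), let $j$ be the sub-net index with $x\in N_{ij}$, and set $p=i\bmod \log(1/\epsilon)$ so that $i\in I_p$. This prescribes the target tree $T_{j,p}$.

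Next I would observe that $x$ is unclustered at the start of iteration $i$ of the construction of $T_{j,p}$: this is exactly \obref{Observ}(a) combined with the hierarchical property of the sub-nets, so at that moment $x$ initiates a new component $C_x$. The technical crux is to show that both $u$ and $v$ land inside $C_x$ with short paths to $x$, by invoking \claimref{claim:clus}. For this I need to verify $u,v\in B(x,\,2/\epsilon\cdot 2^i)$. The inclusion for $u$ is immediate since $d(x,u)\le 2^i$; for $v$ it follows from $d(x,v)\le d(x,u)+d\le 2^i+d$ together with the lower bound $2^i\ge \epsilon d$, which gives $d\le (2/\epsilon-1)\cdot 2^i$ whenever $\epsilon\le 1$, i.e.\ $d(x,v)\le (2/\epsilon)\cdot 2^i$.

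Applying \claimref{claim:clus} twice then yields $d_{C_x}(x,u)\le d(x,u)+2^{i+4}$ and $d_{C_x}(x,v)\le d(x,v)+2^{i+4}$. Since $T_{j,p}$ contains $C_x$, the triangle inequality gives $d_T(u,v)\le d_{C_x}(u,x)+d_{C_x}(x,v)\le d(x,u)+d(x,v)+2^{i+5}$, and estimating $d(x,u)+d(x,v)\le 2d(x,u)+d\le 2\cdot 2^i+d$ produces $d_T(u,v)\le d+O(2^i)$. Using the upper bound $2^i\le 2\epsilon d$ this becomes $(1+O(\epsilon))\cdot d(u,v)$, while the lower bound $d_T(u,v)\ge d(u,v)$ is automatic since $T$ is dominating.

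The main obstacle, and the reason the trees are parametrized by both $j$ and $p$, is a balancing act hidden in the choice $2^i\approx \epsilon d$: the clustering radius $3/\epsilon\cdot 2^i$ around $x$ must be large enough to enclose $v$ (so $2^i$ cannot be much smaller than $\epsilon d$), while the detour contributed by previously formed sub-clusters must stay within $O(\epsilon)\cdot d$ (so $2^i$ cannot be much larger than $\epsilon d$, and the consecutive levels of $I_p$ must be separated by a factor of $1/\epsilon$ so that earlier sub-clusters of the same tree contribute diameter only $O(\epsilon)\cdot 2^i$). The parameters chosen in the construction---sub-net separation $6/\epsilon\cdot 2^i$, level gap $\log(1/\epsilon)$, and clustering radius $3/\epsilon\cdot 2^i$---satisfy all three constraints simultaneously, which is precisely what makes the argument go through.
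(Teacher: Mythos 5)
Your proof is correct and follows essentially the same route as the paper's: choose $i$ with $2^i\approx\epsilon\,d(u,v)$, take the net point $x\in N_i$ nearest to $u$, set $p=i\bmod\log(1/\epsilon)$ and $j$ accordingly, verify $u,v\in B(x,2/\epsilon\cdot 2^i)$, apply \claimref{claim:clus} to both endpoints, and sum. (One minor imprecision: $x$ need not \emph{initiate} a new component at level $i$ — it may already carry a component built at lower levels $i'\in I_p$ with $x\in N_{i'j}$ — but this does not affect the argument, since \claimref{claim:clus} is stated for exactly the component $C_x$ after the level-$i$ clustering.)
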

\begin{proof}
Choose $i$ such that $2^{i-1}/\epsilon\le d(u,v)<2^i/\epsilon$, and let $p=i~{\rm mod}~\log(1/\epsilon)$. Let $x\in N_i$ be the nearest net point to $u$, so that $d(x,u)\le 2^i$, and let $1\le j\le t$ be such that $x\in N_{ij}$.

Observe that $d(x,v)\le d(x,u)+d(u,v)\le 2^i+2^i/\epsilon<2/\epsilon\cdot 2^i$, so by Claim \ref{claim:clus} \[d_{C_x}(x,v)\le d(x,v)+ 2^{i+4}\le d(x,u)+d(u,v)+2^{i+4}=(1+O(\epsilon))\cdot d(u,v).\] We also have by the same claim that $d_{C_x}(x,u)\le d(x,u)+2^{i+4}= O(\epsilon)\cdot d(u,v)$. The fact that $C_x$ is a subtree of the forest $T=T_{j,p}$ completes the proof.
\end{proof}
Since we use edge weights that are the actual distances in $(X,d)$, clearly $d_T\ge d$. Rescaling $\epsilon$ by a constant yields the following.
\begin{theorem}
For every metric $(X,d)$ with doubling constant $\lambda$, and any $0<\epsilon<1$, there is an efficient algorithm to construct a tree cover of size $\lambda^{O(\log(1/\epsilon))}$, with distortion $1+\epsilon$.
\end{theorem}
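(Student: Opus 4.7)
The plan is to assemble the theorem directly from the preceding Lemma and Claims, together with a routine rescaling of $\epsilon$. The Lemma already establishes the key stretch bound: for every pair $u,v \in X$ there exists some tree $T_{j,p}$ with $d_{T_{j,p}}(u,v) \le (1+O(\epsilon)) \, d(u,v)$. Since every edge added during the construction carries its true weight in $(X,d)$, the triangle inequality gives $d_{T_{j,p}}(u,v) \ge d(u,v)$ for every pair, so each $T_{j,p}$ is dominating and $\{T_{j,p}\}$ is a tree cover with distortion $1+O(\epsilon)$.

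Next, I would count the trees. The pair $(j,p)$ ranges over $1 \le j \le t$ with $t = \lambda^{O(\log(1/\epsilon))}$, and $0 \le p < \log(1/\epsilon)$, so the total number of trees is
\[
t \cdot \log(1/\epsilon) \;=\; \lambda^{O(\log(1/\epsilon))} \cdot \log(1/\epsilon) \;=\; \lambda^{O(\log(1/\epsilon))},
\]
since the extra $\log(1/\epsilon)$ factor is absorbed into the hidden constant in the exponent. To replace $1+O(\epsilon)$ by $1+\epsilon$, I would re-run the entire construction with parameter $\epsilon' = \epsilon/C$ for the constant $C$ hiding inside the $O(\cdot)$; this only changes the constant in the exponent of $\lambda$, preserving the asymptotic form $\lambda^{O(\log(1/\epsilon))}$.

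For the efficiency claim, a standard greedy procedure produces a hierarchical collection of $2^i$-nets in polynomial time, the sub-net partition is greedy with the bound on $t$ following from the doubling property applied to a ball of radius $6/\epsilon \cdot 2^i$, and each tree-clustering pass processes each point at each relevant level once; the overall running time is polynomial in $n$ and $\log(\Delta/\delta)$. The main obstacle is essentially absent at this stage — all the substantive geometric content (the separation of the sub-nets, the diameter bound on each cluster, the forest property, and the local stretch estimate of \claimref{claim:clus}) has already been carried out, so the theorem is a clean packaging of those results.
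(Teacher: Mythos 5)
Your proposal matches the paper's own proof: the theorem is indeed just the preceding Lemma plus the observation that edge weights are true distances (so domination follows from the triangle inequality), a count of the $t\cdot\log(1/\epsilon)=\lambda^{O(\log(1/\epsilon))}$ forests, and a rescaling of $\epsilon$ by a constant to turn $1+O(\epsilon)$ into $1+\epsilon$. This is exactly the two-line wrap-up the paper gives, so the packaging is correct and essentially identical.
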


\section{$(1+\epsilon)$-Distortion Tree Covers for Planar and Minor-Free Graphs}\label{sec:tree-cover-planar}

In this section we use {\em path-separators} for planar \cite{T04} and more generally minor-free graphs \cite{AG06}, to devise tree covers with $1+\epsilon$ distortion and $O((\log n)/\epsilon)^2$ trees. We start with some preliminary definitions.

A graph $G$ has $H$ as a minor if one can obtain $H$ from $G$ by a sequence of edge deletions, vertex deletions and edge contractions. The graph $G$ is $H$-minor-free if it does not contain $H$ as a minor.
\begin{definition}\label{def:path-separator}
A graph $G=(V,E)$ on $n$ vertices is {\em $s$-path separable} if there exists an integer $t$ and a separator $S\subseteq V$ such that:
\begin{enumerate}
\item $S=V({\cal P}_0)\cup V({\cal P}_1)\cup\dots\cup V({\cal P}_t)$, where for each $0\le i\le t$, ${\cal P}_i$ is a collection of shortest paths in the graph $G\setminus(\bigcup_{0\le j<i}{\cal P}_j)$ (and $V({\cal P}_i)$ is the vertex set used by the paths in ${\cal P}_i$).

\item $\sum_{i=0}^t|{\cal P}_i|\le s$, that is, the total number of paths is at most $s$.

\item Each connected component of $G\setminus S$ is $s$-path separable and has at
most $n/2$ vertices.
\end{enumerate}
\end{definition}
\begin{theorem}[\cite{AG06}]\label{thm:path-separator}
Every $H$-minor-free graph is $s$-path separable for some $s = s(H)$, and
an $s$-path separator can be computed in polynomial time.
\end{theorem}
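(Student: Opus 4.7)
The plan is to combine the Robertson--Seymour graph minors structure theorem with an extension of Thorup's planar path-separator argument. The structure theorem asserts that every $H$-minor-free graph admits a tree-decomposition whose bags are obtained by clique-sums of bounded size from graphs that are ``almost embeddable'' on a surface of bounded genus --- more precisely, each almost-embeddable piece embeds in a surface of genus $g = O(1)$ after deleting $O(1)$ apex vertices and cutting open a bounded number of vortices of bounded path-width. All of these constants depend only on $H$, so they contribute only to $s = s(H)$.

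First I would use the tree-decomposition to reduce to a single almost-embeddable piece. A standard centroid argument gives a bag $B$ whose removal breaks the graph into subtrees, each of whose union of bags contains at most $n/2$ vertices; the $O(1)$ clique-sum vertices joining $B$ to its neighbors are absorbed into the separator as trivial ``paths'' of length one in $\mathcal{P}_0$. Working inside $B$, I would peel off the $O(1)$ apex vertices (again as trivial paths) and then handle each vortex using its path-decomposition: a vortex of bounded path-width can be covered by $O(1)$ paths drawn from its decomposition, adding one further collection $\mathcal{P}_i$. What remains after these removals is genuinely embedded on a surface of genus at most $g$.

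For the bounded-genus piece I would extend Thorup's planar argument. Thorup picks an arbitrary root $r$, fixes a shortest-path tree from $r$, and iteratively chooses a shortest $r$--$v$ path whose removal splits the embedded region (via the Jordan curve theorem) in a balanced way; $O(1)$ such paths suffice to reduce every remaining region to size at most $n/2$. For genus $g>0$ the topology must be killed first: one adds $O(g)$ shortest non-separating cycles, each of which decomposes into a bounded number of shortest paths in the appropriate residual graph, yielding successive layers $\mathcal{P}_0, \mathcal{P}_1, \dots$ Finally one recurses on each component of $G \setminus S$, which is again $H$-minor-free and of size at most $n/2$, inheriting $s$-path separability. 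The main obstacle I expect is maintaining the invariant of \defref{def:path-separator}(1) that each $\mathcal{P}_i$ consists of shortest paths in the residual graph $G \setminus \bigcup_{j<i} \mathcal{P}_j$: this is precisely what forces the \emph{layered} structure of the definition, and it requires the genus-reduction cycles to be chosen with respect to the metric after the previous layer has been removed, rather than in the original graph. Polynomial-time constructibility follows because the Robertson--Seymour decomposition, shortest-path trees, planar/genus embeddings, and path-decompositions of bounded-width vortices can all be computed in polynomial time for fixed $H$.
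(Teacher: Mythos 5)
This theorem is stated in the paper as a citation to \cite{AG06}; the paper provides no proof of its own, so there is no in-paper argument against which your proposal can be checked. Judged against the actual Abraham--Gavoille proof, your high-level strategy is the right one and essentially matches theirs: invoke the Robertson--Seymour structure theorem, reduce via a centroid bag of the resulting tree-decomposition, strip apices and vortices, and extend Thorup's planar shortest-path separator construction to the surviving bounded-genus piece, maintaining the layered ``shortest in the residual graph'' invariant of \defref{def:path-separator} throughout.

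However, the sketch has a concrete gap in the vortex step. You assert that a vortex of bounded path-width ``can be covered by $O(1)$ paths drawn from its path-decomposition,'' but bounded path-width does not imply a bounded path-cover number: the star $K_{1,m}$ has path-width $1$, yet any path in it uses at most three vertices, so covering it requires $\Omega(m)$ paths. So this step, as stated, is false, and the vortices must be handled by a different mechanism (this is one of the places where the real technical work of \cite{AG06} lives). The same goes for the bounded-genus step: you correctly identify that the $O(g)$ topology-killing cycles must decompose into shortest paths \emph{in the residual graph after earlier layers are removed}, and you correctly flag this as the main obstacle, but the sketch states the requirement rather than discharging it (a non-separating cycle that is short in $G$ need not decompose into a bounded number of shortest paths of $G\setminus\bigcup_{j<i}\mathcal{P}_j$). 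Until those two holes are filled, the argument does not establish the theorem.
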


The following Lemma is implicit in the works of \cite{K02,T04}, we provide a proof in \appendixref{app:portal-proof} for completeness.
\begin{lemma}\label{lem:landmarks}
Let $G=(V,E)$ be an edge-weighted graph, fix any $0<\epsilon<1$, and let $P$ be a shortest path in $G$. Then one can find for each $x\in V$ a set of landmarks $L_x$ on $P$ of size $|L_x|=O(1/\epsilon)$, such that for any $x,y\in V$ whose shortest path between them intersects $P$, there exists $u\in L_x$ and $v\in L_y$ satisfying
$d_G(x,u)+d_P(u,v)+d_G(v,y)\le(1+\epsilon)\cdot d_G(x,y)$.
\end{lemma}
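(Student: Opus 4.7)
The plan is to construct $L_x$ as a set of ``portals'' of $x$ on $P$, placed at geometrically spaced arclengths relative to $x$'s closest point on $P$, following the approach standard in the shortest-path separator literature. For each $x \in V$, let $\pi_x$ be a vertex of $P$ minimizing $d_G(x,\cdot)$, and write $r_x = d_G(x,P) = d_G(x,\pi_x)$. Parametrize $V(P)$ by arclength along the path; then include in $L_x$ the vertex $\pi_x$, together with, on each side of $\pi_x$, landmarks at positions forming a $(1+\epsilon)$-geometric sequence starting from arclength $\epsilon r_x$ (concretely, for each $j\ge 0$ take the first vertex of $P$ whose arclength from $\pi_x$ exceeds $\epsilon r_x (1+\epsilon)^j$).

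The key claim to establish is: for every $p \in V(P)$, there is a landmark $u \in L_x$ with $d_P(u,p) \le \tfrac{\epsilon}{4} \cdot d_G(x,p)$. This follows from the lower bound $d_G(x,p) \ge \max\{r_x,\, d_P(\pi_x,p) - r_x\}$, which is obtained from the triangle inequality together with the minimality of $\pi_x$; one then splits on whether $d_P(\pi_x,p)$ is at most $\epsilon r_x$ (take $u=\pi_x$ and use $d_G(x,p) \ge r_x$) or larger (take the geometric landmark closest to $p$, noting that then $d_G(x,p) = \Omega(d_P(\pi_x,p))$, so an arclength error of $\epsilon\cdot d_P(\pi_x,p)$ is within the budget).

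Given $x, y$ whose shortest $G$-path passes through some $p \in V(P)$, we have $d_G(x,y)=d_G(x,p)+d_G(p,y)$. Pick $u \in L_x$ and $v \in L_y$ via the key claim. Using $d_G(p,u) \le d_P(p,u)$ (since $P \subseteq G$) and $d_P(u,v) \le d_P(u,p)+d_P(p,v)$, two triangle inequalities (in $G$ for $x\!\to\! u$ and $v\!\to\! y$, in $P$ for $u\!\to\! v$) yield
\[
d_G(x,u)+d_P(u,v)+d_G(v,y) \;\le\; d_G(x,y) + 2\bigl(d_P(u,p)+d_P(v,p)\bigr) \;\le\; (1+\tfrac{\epsilon}{2})\cdot d_G(x,y),
\]
which gives the claim after rescaling $\epsilon$ by a constant.

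The hardest part, I expect, is pinning down $|L_x|=O(1/\epsilon)$ rather than the easier $O(\log(\diam P / r_x)/\epsilon)$ that comes out of the naive geometric placement above; removing the logarithmic factor likely requires either consolidating landmarks beyond arclength $\poly(1/\epsilon)\cdot r_x$ (where the approximation tolerance has grown enough that the endpoint of $P$ on that side suffices), or defining the landmarks directly in terms of the function $p\mapsto d_G(x,p)$ on $P$ rather than in terms of arclengths on $P$.
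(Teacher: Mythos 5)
Your chain of triangle inequalities in the combining step is correct, and placing geometric landmarks around the foot $\pi_x$ of $x$ on $P$ is a sensible starting point, but as you already suspect the landmark count is where this fails, and the fix you need is the second one you mention, not the first. The ``key claim'' you target, $d_P(u,p)\le\tfrac{\epsilon}{4}\,d_G(x,p)$, is strictly stronger than what the combining step uses, and it is unachievable with $o(\log(\diam(P)/r_x)/\epsilon)$ landmarks: for far points $p$ with $d_P(\pi_x,p)\gg r_x$ one has $d_G(x,p)\approx d_P(\pi_x,p)$, so the claim forces a landmark within relative arclength $\Theta(\epsilon)$ of every such $p$, and no consolidation at $\pi_x$ or at the path endpoint can serve them. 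Relaxing to the portal condition that your combining step actually needs, namely $d_G(x,u)+d_P(u,p)\le(1+\epsilon')\,d_G(x,p)$, does let $\pi_x$ alone handle every $p$ with $d_P(\pi_x,p)\gtrsim r_x/\epsilon'$ (since $d_G(x,p)\ge d_P(\pi_x,p)-r_x$, so $r_x+d_P(\pi_x,p)\le(1+\epsilon')d_G(x,p)$ once $d_P(\pi_x,p)\ge(2+\epsilon')r_x/\epsilon'$), which caps the arclength window needing geometric coverage at $[\Theta(\epsilon r_x),\Theta(r_x/\epsilon)]$ and brings the count down to $O(\log(1/\epsilon)/\epsilon)$ --- still a $\log(1/\epsilon)$ factor short of the lemma, and a factor that would propagate into the tree-cover bound downstream.

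Eliminating that last factor requires, exactly as your closing sentence guesses, choosing landmarks adaptively by the function $p\mapsto d_G(x,p)$ rather than by arclength, and this is what the paper's proof does. Starting from $z_0=\pi_x$, traverse $P$ in one direction and greedily add $z_i$ as the first vertex past $z_{i-1}$ for which $d_G(x,z_i)<\tfrac{1}{1+\epsilon}\bigl(d_G(x,z_{i-1})+d_P(z_i,z_{i-1})\bigr)$. The size bound then comes from a potential argument on $\phi(z):=d_G(x,z)-d_P(z,z_0)$: this quantity lies in $[-r_x,\,r_x]$ precisely because $P$ is a shortest path and $\pi_x$ is nearest to $x$, and the greedy condition forces $\phi(z_i)<\phi(z_{i-1})-\tfrac{\epsilon}{2}r_x$, so there are at most $4/\epsilon$ landmarks per direction. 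Correctness then uses exactly the portal condition above, applied to the first and last vertices where the $x$--$y$ shortest path meets $P$ (which, unlike in your write-up, need not be the same vertex $p$). So your high-level plan and your final inequalities are sound; what is missing is the greedy, distance-driven placement, which is both necessary for the $O(1/\epsilon)$ bound and in fact cleaner to analyze than the two-regime geometric scheme you would otherwise have to set up.
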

\noindent
{\bf Construction.}
Using these tools, we are ready to describe our tree cover for minor-free graphs. Apply the path separator of Theorem \ref{thm:path-separator} on the input graph $G=(V,E)$, $|V|=n$, to obtain a collection ${\cal P}$ of $s$ paths, and denote $S=V({\cal P})$. For each path $P\in{\cal P}$, apply Lemma \ref{lem:landmarks} to get a set of landmarks for each vertex, and let $\ell=\max_{x\in V}\{|L_x|\}=O(1/\epsilon)$ be the maximal size of a landmark set. Let $T$ be a tree formed by taking $P$, and for each $x\in  V\setminus V(P)$ add a single edge to $u\in L_x$ chosen uniformly  and independently at random. Let $d_G(x,u)$ be the weight of a chosen edge. We pick $(C\log n)/\epsilon^2$ such trees independently for each path $P$, for sufficiently large constant $C$.

Next, we continue recursively on each connected component of $G\setminus S$. Since the number of vertices halves at every iteration, there will be $O(\log n)$ iterations. Furthermore, the trees of different connected components can be viewed as a forest of $G$ (which can be arbitrarily completed to a tree), thus the total number of trees is $O((\log n)/\epsilon)^2$.\\ \\
\noindent
{\bf Analysis.}
Fix some $x,y\in V$, and let $P$ be the first path in ${\cal P}$ that intersects the shortest path between $x,y$ in $G$. (It may be the case that ${\cal P}$ is a path separator in a deep level of the recursion, when we decompose some subgraph $G'$. Note that $d_{G'}(x,y)=d_G(x,y)$, since no path intersected the shortest path from $x$ to $y$ so far. So w.l.o.g. we call the current graph $G$.) Let $u\in L_x$ and $v\in L_y$ be such that $d_G(x,u)+d_P(u,v)+d_G(v,y)\le(1+\epsilon)\cdot d_G(x,y)$, which are guaranteed to exist by Lemma \ref{lem:landmarks}. If we choose a tree $T$ that contains $P$ and both edges $(x,u),(y,v)$, then $T$ will have distortion at most $1+\epsilon$ for the pair $x,y$. The probability that both $x,y$ add these edges to $T$ is at least $1/\ell^2=\Omega(\epsilon^2)$. Thus, the probability that none of the trees created for the path $P$ has distortion at most $1+\epsilon$ for the pair $x,y$ is at most $(1-\Omega(\epsilon^2))^{(C\log n)/\epsilon^2}\le e^{-3\ln n}=1/n^3$,
whenever $C$ is sufficiently large.
By the union bound over the ${n\choose 2}$ pairs, with high probability all pairs have a tree with distortion $1+\epsilon$ in that tree. We have proven the following.

\begin{theorem}
Let $G$ be a graph on $n$ vertices that is $H$-minor-free. For any $0<\epsilon<1$, there is a randomized efficient algorithm that w.h.p. constructs a tree cover for $G$ containing $O((\log n)/\epsilon)^2$ trees with distortion $1+\epsilon$. (The constant in the $O$-notation depends on $|H|$.)
\end{theorem}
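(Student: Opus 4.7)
The plan is to realize the bound via recursive application of the path-separator theorem combined with a randomized landmark-connection construction, building one batch of trees per separator path at each level. Concretely, I would first apply \theoremref{thm:path-separator} to the input graph $G=(V,E)$ to obtain an $s$-path separator $S\subseteq V$ with $s=s(H)=O(1)$ shortest paths. For every path $P$ in this separator, I would build a batch of $\Theta((\log n)/\epsilon^2)$ randomized trees independently, and then recurse on each connected component of $G\setminus S$. Merging the independently-built trees across sibling components (whose vertex sets are disjoint) into single forests per batch keeps the tree count at $O((\log n)/\epsilon^2)$ per recursion level, and since vertex counts halve at each level the depth is $O(\log n)$, giving $O((\log n)/\epsilon)^2$ trees overall.

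For a single tree associated with a path $P$, I would invoke \lemmaref{lem:landmarks} to assign to each vertex $x\in V$ a landmark set $L_x\subseteq V(P)$ of size $\ell=O(1/\epsilon)$. The tree consists of $P$ itself together with, for each $x\notin V(P)$, a single edge of weight $d_G(x,u)$ to a uniformly random $u\in L_x$, the choice made independently across vertices and across trees. This is genuinely acyclic because each non-path vertex contributes exactly one edge, and dominance $d_T\geq d_G$ holds because edge weights are true graph distances and $P$ is a shortest path.

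For the distortion analysis, I would fix a pair $x,y\in V$ and follow them down the recursion until the first level at which the current separator path $P$ intersects some shortest $x$-$y$ path in the current subgraph. At that moment \lemmaref{lem:landmarks} produces landmarks $u^\star\in L_x$ and $v^\star\in L_y$ with $d_G(x,u^\star)+d_P(u^\star,v^\star)+d_G(v^\star,y)\le (1+\epsilon)d_G(x,y)$. In any single tree built for $P$, the event that $x$'s random edge hits $u^\star$ \emph{and} $y$'s hits $v^\star$ occurs with probability at least $1/\ell^2=\Omega(\epsilon^2)$, and on that event the tree distance between $x$ and $y$ is at most $(1+\epsilon)d_G(x,y)$. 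Across $\Theta((\log n)/\epsilon^2)$ independent trees, the probability that \emph{no} tree succeeds is at most $(1-\Omega(\epsilon^2))^{C(\log n)/\epsilon^2}\le n^{-3}$ for $C$ large enough; a union bound over the $\binom{n}{2}$ pairs then gives the claim w.h.p.

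The main subtlety worth checking carefully is the well-definedness of ``the first intersecting level'' together with the identity between shortest-path distances in the current subgraph and in $G$. This is fine because if no separator path processed so far has touched any shortest $x$-$y$ path, then $x$ and $y$ remain in a common component at each level and every $G$-shortest path between them survives intact into the current subgraph; therefore the landmark inequality from \lemmaref{lem:landmarks} applied at that level yields the desired bound in terms of $d_G(x,y)$, not merely in terms of the distance in the possibly-smaller subgraph.
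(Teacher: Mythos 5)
Your proposal matches the paper's proof essentially step for step: the same recursive application of the path-separator theorem, the same randomized landmark-connection construction of $\Theta((\log n)/\epsilon^2)$ trees per separator path, the same merging of sibling-component trees into forests to control the count, and the same probability argument with a union bound over pairs. The subtlety you flag about the first intersecting level and the identity $d_{G'}(x,y)=d_G(x,y)$ is exactly the one the paper also addresses, and you handle it correctly.
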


\section{Tree Covers  for Doubling Metrics with Distortion-Size Tradeoff}\label{sec:cover-vs-partition}
In this section we prove that any metric space with doubling constant $\lambda$  has a tree cover with distortion $O(\alpha)$ of size $O(\lambda^{1/\alpha}\log \lambda \log \alpha)$, and also that graphs excluding a fixed minor have a tree cover with distortion $O(1)$ of size $O(1)$.

Recall that ultrametric is a metric space obeying a strong form of the triangle inequality. It is well known that any finite ultrametric $(U,\rho)$ can be represented by a finite labeled tree $T$, with the points of $U$ being the leaves of $T$. Each node $ u \in T$ has a label $\Delta(u) \geq 0$ and the label of each leaf is $0$. For any two nodes $u$ and $v$, such that $v$ is a child of $u$, $\Delta(u) \geq \Delta(v)$. For $u, v \in U$, the distance $\rho(u,v)$ is defined to be the label of their least common ancestor. We refer to ultrametrics by their tree representation. If the labels in an ultrametric tree $T$ are decreasing by a factor at most $\mu >1$, then $T$ is called a $\mu$-Hierarchically Separated Tree metric ($\mu$-HST) \cite{bartal96}.  We note that an ultrametric space can also be represented as a shortest path metric on a Steiner tree.

For a finite metric $(X,d)$, let $d_{\max}=\max_{x \neq y \in X}\{d(x,y)\}$, and $d_{\min}=\min_{x \neq y \in X}\{d(x,y)\}$. Let $\Phi(X):={d_{\max}}/{d_{\min}}$ denote the aspect ratio of $X$.

\subsection{Probabilistic Hierarchical Partition Family}
We start with the necessary definitions. For any $\Delta >0$, a $\Delta$-bounded partition ${P}$ of a finite metric space $(X,d)$ is a collection of pairwise disjoint clusters $P_i \subseteq X$, such that $\cup P_i=X$, and for each cluster $P_i \in P$, $\diam(P_i) \leq \Delta$.  We assume that each cluster has some point designated as its center. For a point $x\in X$, let $P(x) \in P$ denote the cluster that contains $x$. A $\Delta$-bounded probabilistic partition of $X$ is a distribution $\mathcal{P}$ over a set of $\Delta$-bounded partitions of $X$.

The notion of a padding parameter of a random partition is studied in various papers \cite{LS91,KPR93,bartal96,FRT04}. We use a stronger definition given by Abraham et al. in \cite{ABN11}, where the padding parameter depends on the desired probability of success. The following is a rephrased version of their original definition (\cite{ABN11}, Definition $17$):
\begin{definition}[Padded Probabilistic Partition]
Let $\eta(\delta)\colon (0,1] \to (0,1]$ be some function, and $(a, b] \subseteq (0, 1]$ be some range. A $\Delta$-bounded probabilistic partition $\mathcal{P}$ is $\eta(\delta)$-padded on the range $(a,b]$, if for all $x\in X$ and for all $\delta\in (a, b]$, $\Pr\limits_{{P \sim \mathcal{P}}}\left[ B(x,\eta(\delta)\cdot\Delta) \subseteq P(x) \right] \geq \delta$.
\end{definition}

In addition, the authors defined a notion of a {\it locally} padded probabilistic partition (on the range $(a, b]$): $\mathcal{P}$ is $\eta(\delta)$-locally padded if for all $a <\delta \leq b$ the event $B(x,\eta(\delta)\cdot\Delta) \subseteq P(x)$ occurs with probability at least $\delta$ regardless of the structure of the partition outside the ball $B(x, 2\Delta)$. Formally stated, for all $x\in X$, for all subsets $C \subseteq X\setminus B(x,2\Delta)$ and all partitions $P'$ of $C$,
$\Pr\limits_{P \sim \mathcal{P}}\left[B(x, \eta(\delta)\cdot \Delta) \subseteq P(x)\;|\;P[C]=P' \right] \geq \delta$,
where $P[C]$ denotes the restriction of the partition $P$ to $C$.  Our construction uses their random partitions as a building block:

\begin{lemma}[\cite{ABN11}, Lemma $8$]\label{claim:local_one_scale}
Given a finite metric space $X$ with doubling constant $\lambda$, and given any $0<\Delta< \diam(X)$, there is a $\Delta$-bounded, $\left(\frac{\log (1/\delta)}{2^6 \log \lambda}\right)$-locally padded probabilistic partition $\mathcal{P}$ of $X$, for $\delta \in \left[\lambda^{{-2}^{12}},1 \right]$.
\end{lemma}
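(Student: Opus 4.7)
The plan is to use a net-based Calinescu--Karloff--Rabani style random partition, carefully tuned to the doubling structure. First I would pick a $(\Delta/8)$-net $Y \subseteq X$; by the doubling property every ball of radius $c\Delta$ meets at most $\lambda^{O(1)}$ net points, a fact I will rely on repeatedly. The random partition is generated by sampling a radius $R$ uniformly from $[\Delta/4,\Delta/2]$ and an independent uniformly random permutation $\pi$ of $Y$. Then each $x\in X$ is assigned to the cluster whose center $y\in Y$ is the \emph{first} net point in $\pi$ satisfying $d(x,y)\le R$; every $x$ has at least one such $y$ because $Y$ is a net. Since each cluster is contained in a ball of radius $R\le\Delta/2$, the partition is $\Delta$-bounded.

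Next I would analyse padding. Fix $x\in X$ and let $\eta$ be a parameter to be chosen. Order the net points within distance $\Delta$ of $x$ as $y_1,y_2,\dots,y_m$ with $m\le\lambda^{O(1)}$, in increasing order of $d(x,y_i)$. Call $y_i$ \textbf{cutting} for the ball $B(x,\eta\Delta)$ if $d(x,y_i)\in(R-\eta\Delta,R+\eta\Delta]$, so that $y_i$'s ball of radius $R$ separates points inside $B(x,\eta\Delta)$. Call $y_i$ \textbf{covering} if $d(x,y_i)\le R-\eta\Delta$, so $y_i$ covers all of $B(x,\eta\Delta)$. The ball is contained in $P(x)$ precisely when, in the $\pi$-order restricted to $\{y_1,\dots,y_m\}$, the first relevant (cutting-or-covering) index is covering. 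Since $R$ is uniform on an interval of length $\Delta/4$, each individual $y_i$ is cutting with probability at most $8\eta$. The main calculation then is a union bound over ``bad events'': conditioning on the relative $\pi$-order, I would bound the probability of being unpadded by considering, for each $i$, the probability that $y_i$ is the first relevant net point \emph{and} is cutting, which telescopes to $\prod_{i=1}^{m}(1-p_i)$ with $p_i\le 8\eta$. Taking $\eta=\frac{\log(1/\delta)}{2^6\log\lambda}$ and using $m\le\lambda^{O(1)}$ gives the required $\Pr[B(x,\eta\Delta)\subseteq P(x)]\ge\delta$; the restriction $\delta\ge\lambda^{-2^{12}}$ ensures $\eta\le 1/8$, so that cutting/covering is well defined and distinct.

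For the \emph{locality} claim, the key observation is that only net points within distance $\Delta$ of $x$ can possibly belong to a cluster meeting $B(x,\eta\Delta)$, since any $y\in Y$ with $d(x,y)>2\Delta$ has $d(x,y)>R+\eta\Delta$ with probability $1$, hence is never cutting or covering with respect to $x$. Thus the event $B(x,\eta\Delta)\subseteq P(x)$ is measurable with respect to the restriction of the randomness to the net points inside $B(x,2\Delta)$, together with $R$. Conditioning on any partition of $X\setminus B(x,2\Delta)$ fixes part of the permutation but leaves the relative order of the net points inside $B(x,2\Delta)$ uniformly random; this relative order is all that enters the previous calculation, and $R$ is independent of everything else. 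Hence the same lower bound $\delta$ holds under the conditioning, establishing the local padding.

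The main obstacle, and the step I would spend most care on, is the telescoping union bound over cutting events: one must simultaneously handle the randomness of $R$ (determining which net points are cutting) and of $\pi$ (determining the winning net point), and show that they combine to give a \emph{product} rather than a loose sum bound. This is what drives the $\log(1/\delta)/\log\lambda$ scaling with the optimal dependence on $\lambda$; losing a logarithmic factor here would give a weaker padding than required. The explicit constants $2^6$ and $2^{12}$ arise from the concrete choices of the net scale $\Delta/8$, the radius window $[\Delta/4,\Delta/2]$, and the doubling bound on $m$.
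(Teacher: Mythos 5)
This lemma is cited from~\cite{ABN11}; the paper itself gives no proof. So the question is whether your sketch actually establishes the statement, and I believe it does not, for two separate reasons.

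\textbf{The padding bound does not come out.} You claim the padding probability is at least $\prod_{i=1}^m(1-p_i)$ with $p_i\le 8\eta$ and $m=\lambda^{O(1)}$. Taken at face value, this gives $(1-8\eta)^m\approx e^{-8\eta m}$; setting this $\ge\delta$ forces $\eta\lesssim\log(1/\delta)/m=\log(1/\delta)/\lambda^{O(1)}$, which is \emph{exponentially} weaker than the required $\eta=\Theta(\log(1/\delta)/\log\lambda)$. The ingredient you are missing is the permutation ordering: if the net points in $B(x,\Delta)$ are labelled $y_1,\dots,y_m$ in increasing order of $d(x,y_i)$, then for $y_i$ to be the first net point (in $\pi$-order) whose ball touches $B(x,\eta\Delta)$, it must precede all of $y_1,\dots,y_{i-1}$ in $\pi$ (because a cutting $y_i$ satisfies $d(x,y_i)\le R+\eta\Delta$, so every closer $y_j$ is also relevant). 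This ``new-leader'' event has probability $1/i$, and conditioning on the radius the padding probability becomes the product $\prod_{i\;\mathrm{cutting}}(1-1/i)$, not a product of terms $(1-8\eta)$. Averaging over the radius and using Jensen's inequality (convexity of $\exp$) then yields $\Pr[\mathrm{padded}]\ge\exp\bigl(-\sum_i\Pr[y_i\ \mathrm{cutting}]\cdot\log\frac{i}{i-1}\bigr)\ge\exp(-8\eta\log m)$. Since $\log m=O(\log\lambda)$, this is $\ge\delta$ precisely when $\eta\lesssim\log(1/\delta)/\log\lambda$. Without the $1/i$ factor and the harmonic sum, you cannot reach $\log\lambda$ in the denominator.

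\textbf{A single global radius breaks locality.} You argue that the padding event is measurable with respect to the local net points plus $R$, and that ``$R$ is independent of everything else.'' But the far partition $P[C]$, $C\subseteq X\setminus B(x,2\Delta)$, also depends on $R$: for a far point $z$ with nearest net point $y'$ at distance $r'$, the event ``$z$ is assigned to $y'$'' forces $R\ge r'$. Conditioning on $P[C]$ therefore \emph{biases} the posterior of $R$ (e.g., restricting it to a subinterval $[r',\Delta/2]$ of $[\Delta/4,\Delta/2]$), and this can increase the probability that a local net point $y_i$ is cutting. Concretely, if $d(x,y_i)=0.4\Delta$, the unconditional cutting probability is $2\eta\Delta/(\Delta/4)=8\eta$, while conditioning on $R\ge\Delta/3$ raises it to $2\eta\Delta/(\Delta/6)=12\eta$. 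So the scheme with one global $R$ is not locally padded in the sense of Definition~17 of~\cite{ABN11}. The standard remedy is to give each net point its own \emph{independent} radius $R_i$ (truncated in $[\Delta/4,\Delta/2]$): then the padding event for $x$ depends only on the radii and relative $\pi$-order of net points inside $B(x,\Delta)$, while $P[C]$ depends only on the radii and relative $\pi$-order of net points outside $B(x,3\Delta/2)$, and these two collections of randomness are genuinely independent. With independent radii the padding probability factors exactly as $\prod_i(1-q_i/i)$, $q_i\le 8\eta$, giving both the right bound and locality.

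In short, the high-level idea (net plus random radius plus random permutation) is the right family of constructions, but the two crucial points of the lemma --- the $\log\lambda$ (rather than $\lambda^{O(1)}$) in the denominator, and the locality under conditioning --- are exactly where your sketch is wrong, and both require nontrivial fixes.
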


A set of nested partitions of $X$ forms a hierarchy:

\begin{definition}[Hierarchical Partition]\label{def:HP}
For all $\mu>1$, $\Delta \leq d_{\max}(X)$ and integer $ 1\leq B \leq \log_{\mu}\Phi(X)$,  let $\Delta_i=\Delta/\mu^i$, for all $0 \leq i \leq B$. A $\mu$-Hierarchical Partition of $X$ for range $[ \Delta, \Delta_B]$, is a collection $H = \{P_0, \ldots, P_B\}$ of  partitions of $X$ such that: For all $0 \leq i\leq B$, $P_i$ is a $\Delta_i$-bounded partition of $X$; Each $P_{i+1}$ is a refinement of $P_{i}$, i.e. each cluster in $P_{i}$ is a union of some clusters in $P_{i+1}$.  Let $\mu$-${HP}_B(\Delta)$ denote such a collection.

A full range Hierarchical Partition, denoted by $\mu$-HP, is the $\mu$-${HP}_B(\Delta)$, for $\Delta=d_{\max}(X)$ and $B=\log_{\mu}\Phi(X)$ (we assume this is an integer).
\end{definition}
There is a natural way to associate a dominating $\mu$-HST tree to a $\mu$-HP. For each cluster of the partition $P_i$ there is a node in the tree. The nodes associated with clusters of the partition $P_{i+1}$ are the children of nodes associated with clusters of $P_{i}$. The label of all level $i$ nodes in the tree is $ \Delta/\mu^i$. The points of $X$ are at the leaves.


\begin{definition}[$\eta$-Padded $\mu$-Hierarchical Partition Family ]
Let $\eta <1$ and $\mu>1$. For a finite metric space $(X,d)$, an $\eta$-padded $\mu$-Hierarchical Partition Family of $X$, $(\eta, \mu)$-HPF, is a set $\mathcal{H}$ of $\mu$-Hierarchical Partitions $\{H^j\}_{j\geq 1}$ of $X$ such that: For all $x \in X$ and for all scales $0\leq i \leq \log_{\mu}\Phi(X)$, there is an $H^j \in \mathcal{H}$ such that $B(x, \eta\Delta_i) \subseteq P_i^{(j)}(x)$, where $P_i^{(j)}$ is a $\Delta_i$-bounded partition of $H^j$. The size of $\mathcal{H}$ is the number of hierarchical partitions it has.

\end{definition}
The following lemma shows the connection between hierarchical family and a tree cover:



\begin{lemma}\label{lemma:cover_from_HP}
If there is an $(\eta, \mu)$-HPF of size $k$ of $X$, then there is an $({\mu}/{\eta}, k)$-tree cover of $X$.
\end{lemma}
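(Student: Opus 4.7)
The plan is to take, for each hierarchical partition $H^j\in\mathcal{H}$, its naturally associated dominating $\mu$-HST $T_j$ described right after Definition~\ref{def:HP}, thus obtaining a collection of $k$ trees. I will show that $\{T_1,\ldots,T_k\}$ is an $(\mu/\eta,k)$-tree cover of $(X,d)$.

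First I would verify dominance. Fix any $x,y\in X$ and let $i^*$ be the largest level such that $P_{i^*}^{(j)}(x)=P_{i^*}^{(j)}(y)$; such $i^*$ exists because $P_0^{(j)}$ is a single cluster containing all of $X$ (one may prepend a dummy root level to ensure this). Then $x$ and $y$ belong to a common $\Delta_{i^*}$-bounded cluster, so $d(x,y)\le\diam(P_{i^*}^{(j)}(x))\le\Delta_{i^*}$. On the other hand, the least common ancestor of the leaves $x,y$ in $T_j$ is precisely the node representing this cluster, whose label is $\Delta_{i^*}$, so $d_{T_j}(x,y)=\Delta_{i^*}\ge d(x,y)$ and $T_j$ dominates $d$.

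For the distortion bound, given a pair $x\ne y$ I choose the largest index $i\ge 0$ such that $\eta\Delta_i\ge d(x,y)$; by maximality $\Delta_{i+1}<d(x,y)/\eta$, i.e.\ $\Delta_i<\mu\cdot d(x,y)/\eta$. (If $d(x,y)>\eta\Delta_0$, no such $i$ exists, but then already $d_{T_j}(x,y)\le\Delta_0\le d(x,y)/\eta$ for every $j$, and we are done.) By the $(\eta,\mu)$-HPF property, applied at scale $i$ and point $x$, there exists $j$ with $B(x,\eta\Delta_i)\subseteq P_i^{(j)}(x)$. Since $d(x,y)\le\eta\Delta_i$, this forces $y\in P_i^{(j)}(x)$, so $x$ and $y$ share a cluster at level $i$ in $H^j$; consequently $i^*\ge i$ for this $j$, so $d_{T_j}(x,y)=\Delta_{i^*}\le\Delta_i<(\mu/\eta)\cdot d(x,y)$, as required.

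The argument is essentially bookkeeping: the only real conceptual step is matching the padding scale $i$ selected from $d(x,y)$ to a hierarchy in which $y$ is forced into the cluster $P_i^{(j)}(x)$, and then reading off the LCA label in the associated HST. The sole mild annoyance is the boundary behavior at the top of the hierarchy, handled by the parenthetical case above; I do not expect any deeper obstacle.
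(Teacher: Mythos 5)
Your proof is correct and follows essentially the same approach as the paper's: build the associated $\mu$-HST for each hierarchical partition, pick the scale matched to $d(x,y)$, and use the padding property to find a hierarchy in which $y$ lands in $x$'s cluster, then read off the LCA label. The only differences are cosmetic — you index by the largest $i$ with $\eta\Delta_i\ge d(x,y)$ while the paper uses the minimal $i$ with $d(x,y)\ge\eta\Delta_i$ and then works at level $i-1$ (a shift by one), and you explicitly verify dominance and the root/boundary case, which the paper takes for granted by calling the trees ``dominating'' HSTs.
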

\begin{proof}
Let $H^{1}, \ldots, H^{k}$ be an $(\eta, \mu)$-HPF of $X$. Consider an associated collection of dominating $\mu$-HST trees $T_1, \ldots, T_k$.  Given any $x\neq y \in X$, let $i$ be the minimal index such that $d(x,y) \geq \eta\Delta_i$. If $i=0$, then by the construction, for any tree $T_j$, $d_{T_j}(x,y) \leq \Delta_0$, implying $d_{T_j}(x,y)/d(x,y)\leq 1/\eta$.   If $i \geq 1$, then $ \eta \Delta_i\leq d(x,y) \leq \eta \Delta_{i-1}$. The padding property implies that  there is $H^{j}$ such that $B(x, \eta \Delta_{i-1}) \subseteq P_{i-1}^{(j)}(x)$.  As  $y\in B(x, \eta \Delta_{i-1})$, it holds that $d_{T^j}(x,y) \leq \Delta_{i-1}$. Therefore, ${d_{T^j}(x,y)}/{d(x,y)} \leq {\Delta_{i-1}}/{\eta \Delta_{i}} = {\mu}/{\eta}$.
\end{proof}
In what follows, we construct $(\Omega(1/\alpha), 2)$-HPF of $X$, of size $O(\lambda^{1/\alpha}\log \lambda \log \alpha)$.
We note that the notion of hierarchical family also appeared in \cite{KLMN04}, where the authors constructed an $(\Omega(s^{-2})),O(s^2))$-HPF of size $3^s$ for any metric of a $K_{s,s}$-minor free graph. As a corollary, we conclude
\begin{corollary}
For any metric induced on a $K_{s,s}$-minor free graph, there is a tree cover with distortion $O(s^4)$, of size $3^s$.
\end{corollary}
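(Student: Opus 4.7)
The plan is to obtain this as an immediate consequence of the preceding \lemmaref{lemma:cover_from_HP}, combined with the hierarchical partition construction from \cite{KLMN04} cited in the paragraph just before the corollary. Concretely, I would start by recalling that \cite{KLMN04} provides, for any metric induced on a $K_{s,s}$-minor-free graph, an $(\eta,\mu)$-HPF with $\eta = \Omega(s^{-2})$, $\mu = O(s^2)$, and size $3^s$. This is precisely the input that \lemmaref{lemma:cover_from_HP} needs.

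Next, I would invoke \lemmaref{lemma:cover_from_HP} with these parameters. The lemma produces a collection of dominating $\mu$-HST trees (one per hierarchical partition in the family), so the size of the resulting tree cover equals the size of the HPF, namely $3^s$. The distortion guarantee given by the lemma is $\mu/\eta$, which here evaluates to $O(s^2)/\Omega(s^{-2}) = O(s^4)$, matching the statement of the corollary.

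There is essentially no obstacle: the corollary is purely a composition of a known construction (\cite{KLMN04}) with the general conversion lemma established in this section. The only thing to double-check is that the HPF from \cite{KLMN04} is full-range (covers all scales $0 \le i \le \log_{\mu}\Phi(X)$ in the sense of \defref{def:HP}), so that the scale $i$ chosen in the proof of \lemmaref{lemma:cover_from_HP} always falls within the padded range; this holds since the construction of \cite{KLMN04} is indeed given at every scale of a geometric hierarchy. Thus the corollary follows directly, and no additional calculation beyond the substitution $\mu/\eta = O(s^4)$ is required.
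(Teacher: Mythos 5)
Your proposal is correct and takes exactly the paper's intended route: cite the $(\Omega(s^{-2}),O(s^2))$-HPF of size $3^s$ from \cite{KLMN04} and feed it into \lemmaref{lemma:cover_from_HP}, obtaining distortion $\mu/\eta = O(s^2)\cdot O(s^2) = O(s^4)$ with $3^s$ trees. The paper treats this as an immediate corollary with no further elaboration, and your write-up (including the sanity check that the \cite{KLMN04} construction covers all scales) is a faithful, slightly more explicit version of the same argument.
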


In our proofs we will use the following version of the Lovasz Local Lemma:
\begin{lemma}[\cite{EL75}]\label{lemma:lll}
Let $\mathcal{E}_1, \ldots, \mathcal{ E}_n$ be a family of events. Let $G(V,E)$ be a directed graph on $n$ vertices with out-degree at most $d$, where each vertex corresponds to an event. Assume that for all $1\leq i\leq n$, for all $Q \subseteq \{j\;|\; (\mathcal{E}_i, \mathcal{E}_j) \notin E\}$, $\Pr[\mathcal{E}_i\;|\;\bigcap_{j \in Q}{\neg{\mathcal{E}}_j}] \leq p$. If $ep(d+1)\leq 1$, then $\Pr\left[ \bigcap_{i\in [1,n]} \neg{\mathcal{E}}_i \right]>0$.
\end{lemma}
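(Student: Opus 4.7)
The statement is the classical Lov{\'a}sz Local Lemma (in the form where the hypothesis is already conditional), so my plan is the standard Erd{\H o}s--Lov{\'a}sz induction. The key reduction is that it suffices to prove the stronger inductive claim that for every index $i$ and every set $S \subseteq \{1,\ldots,n\}\setminus\{i\}$,
\[
\Pr\!\left[\mathcal{E}_i \,\Big|\, \bigcap_{j \in S} \neg \mathcal{E}_j\right] \;\leq\; ep.
\]
Given this, applying the chain rule with an arbitrary ordering yields
\[
\Pr\!\left[\bigcap_{i=1}^n \neg \mathcal{E}_i\right] \;=\; \prod_{i=1}^n \left(1 - \Pr\!\left[\mathcal{E}_i \,\Big|\, \bigcap_{j<i} \neg \mathcal{E}_j\right]\right) \;\geq\; (1-ep)^n \;>\; 0,
\]
since $ep \leq 1/(d+1)<1$.

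The inductive step is the main content. For the base case $S=\emptyset$, the bound is immediate from the hypothesis with $Q=\emptyset$. For the inductive step, I would split $S = S_1 \cup S_2$, where $S_1 = \{j \in S : (\mathcal{E}_i,\mathcal{E}_j)\in E\}$ is the set of out-neighbors of $i$ in $S$, and $S_2 = S\setminus S_1$ consists of non-out-neighbors. Writing the conditional probability as a ratio,
\[
\Pr\!\left[\mathcal{E}_i \,\Big|\, \bigcap_{j \in S} \neg \mathcal{E}_j\right] \;=\; \frac{\Pr\!\left[\mathcal{E}_i \cap \bigcap_{j \in S_1} \neg \mathcal{E}_j \,\Big|\, \bigcap_{k \in S_2} \neg \mathcal{E}_k\right]}{\Pr\!\left[\bigcap_{j \in S_1} \neg \mathcal{E}_j \,\Big|\, \bigcap_{k \in S_2} \neg \mathcal{E}_k\right]}.
\]
The numerator is at most $\Pr[\mathcal{E}_i \mid \bigcap_{k\in S_2}\neg\mathcal{E}_k] \leq p$ by the lemma's hypothesis, since $S_2$ is a subset of non-out-neighbors of $i$. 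For the denominator, I expand by the chain rule over an ordering $j_1,\ldots,j_m$ of $S_1$ (with $m \leq d$) as $\prod_{l=1}^{m}(1 - \Pr[\mathcal{E}_{j_l}\mid \bigcap_{k \in S_2} \neg\mathcal{E}_k \cap \bigcap_{l'<l} \neg\mathcal{E}_{j_{l'}}])$. Each factor's conditioning set has strictly fewer than $|S|$ indices, so the inductive hypothesis bounds each conditional probability by $ep$, and the denominator is at least $(1-ep)^d$.

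It remains to check $\frac{p}{(1-ep)^d}\leq ep$, i.e.\ $(1-ep)^d \geq 1/e$. Using the hypothesis $ep\leq 1/(d+1)$, one has $1-ep \geq d/(d+1)$, so $(1-ep)^d \geq (d/(d+1))^d = 1/(1+1/d)^d > 1/e$, where the last inequality is the classical bound $(1+1/d)^d < e$. This closes the induction and completes the proof. The main obstacle, such as it is, lies in the bookkeeping for the denominator expansion: one must verify that at each step the conditioning set is strictly smaller than $S$ so that the inductive hypothesis applies, and that the out-neighbor count $|S_1|\leq d$ provides the correct exponent. Once this is done carefully, the rest is a routine calculation.
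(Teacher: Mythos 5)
The paper does not prove this lemma; it cites it directly from \cite{EL75} and uses it as a black box, so there is no internal proof against which to compare. Your argument is the standard Erd{\H o}s--Lov{\'a}sz induction: you prove, by induction on $|S|$, that $\Pr[\mathcal{E}_i \mid \bigcap_{j\in S}\neg\mathcal{E}_j]\leq ep$, splitting $S$ into out-neighbors $S_1$ and non-out-neighbors $S_2$, bounding the numerator by the conditional hypothesis with $Q=S_2$, expanding the denominator by the chain rule over $S_1$ (each factor's conditioning set is strictly smaller than $S$), and finishing with $(1-ep)^d \geq (d/(d+1))^d > 1/e$. The argument is correct. The only point you leave implicit is the usual positivity check: one should verify, as part of the same induction, that $\Pr[\bigcap_{j\in S}\neg\mathcal{E}_j]>0$ so the conditional probabilities in the ratio are defined; this follows at once from the inductive bound $ep<1$ via the chain rule, and is routine.
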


\subsection{Constructing Hierarchical Padded Family of Bounded Size}\label{subsec:const:family}
Our main hierarchical partitions result is:

\begin{theorem}\label{thm:main:doubl:larg:dits}
For any finite metric space $X$ with doubling constant $\lambda$ and for any $\alpha \geq 2$, there is an $\Omega(1/\alpha)$-padded $2$-Hierarchical Partition Family of $X$, of size $O\left(\lambda^{1/\alpha}\log \alpha \log \lambda \right)$.
\end{theorem}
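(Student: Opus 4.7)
The plan is to construct $k = O(\lambda^{1/\alpha}\log\lambda\log\alpha)$ \emph{independent} hierarchical partitions, each built scale-by-scale from the ABN11 probabilistic partition of \lemmaref{claim:local_one_scale}, and then argue via the Lov\'asz Local Lemma (\lemmaref{lemma:lll}) that with positive probability every point is $\eta$-padded at every scale in at least one hierarchy.

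First, I would set the padding parameter to $\eta = c/\alpha$ for a small constant $c$, so that \lemmaref{claim:local_one_scale} applied at scale $\Delta_i = \Delta/2^i$ yields a $\Delta_i$-bounded, locally $\eta$-padded probabilistic partition with padding probability at least $\delta = \lambda^{-\Theta(1/\alpha)}$; the assumption $\alpha \geq 2$ guarantees that $\delta$ lies in the admissible range $[\lambda^{-2^{12}},1]$. Each of the $k$ hierarchies is then built bottom-up in the style of \cite{CGMZ05}: starting from singletons at the finest scale, at each coarser scale $i$ an independent draw of an ABN11 partition is used to merge the clusters of the previous (finer) scale into $\Delta_i$-bounded clusters, yielding a nested family $H^j = \{P^{(j)}_0,\dots,P^{(j)}_B\}$. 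All $k$ hierarchies use mutually independent randomness.

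For each pair $(x,i)$ let $E_{x,i}$ be the event that $x$ is not $\eta$-padded at scale $i$ in any of the $k$ hierarchies; independence across hierarchies gives $\Pr[E_{x,i}] \leq (1-\delta)^k \leq e^{-k\delta}$. By the \emph{locality} of the ABN11 partitions, the padding event in a single hierarchy at $(x,i)$ is determined by the random choices made within $B(x, O(\Delta_i))$, and moreover only an $O(\log\alpha)$-window of scales near $i$ can meaningfully influence it: partitions at scales $i' \geq i + O(\log\alpha)$ produce clusters of diameter $\leq \eta\Delta_i$ which get absorbed inside the padding tolerance, while coarser scales are eclipsed by the scale-$i$ partition itself. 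Combining locality with the doubling property, the dependency degree in the LLL graph is $d = \lambda^{O(1)}\log\alpha$. Applying \lemmaref{lemma:lll} with $p = e^{-k\delta}$, the precondition $ep(d+1)\leq 1$ reduces to $k\delta = \Omega(\log\lambda + \log\log\alpha) = \Omega(\log\lambda)$, which is satisfied by $k = C\,\lambda^{1/\alpha}\log\lambda\log\alpha$ for a sufficiently large constant $C$. The LLL then certifies that with positive probability all the events $E_{x,i}$ fail, so some realization of the construction yields the desired $(\eta,2)$-HPF with $\eta = \Omega(1/\alpha)$ and size $k$.

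The main obstacle will be the scale-dependency analysis in step three. In a bottom-up nested construction the partition at scale $i$ is influenced, through the merging step, by ABN11 draws at many finer scales, and I must quantify precisely that only an $O(\log\alpha)$-wide band of scales meaningfully affects each padding event, while conditioning on the remaining randomness in a way that preserves the locality guarantee of \lemmaref{claim:local_one_scale}. This is where the factor $\log\alpha$ in the final size bound enters; balancing it against the $\lambda^{1/\alpha}$ factor coming from the padding probability $\delta$ and the $\log\lambda$ factor coming from the LLL degree is the delicate calculation of the proof.
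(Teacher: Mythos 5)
Your overall strategy---apply \lemmaref{claim:local_one_scale} at each scale, stack the scales into a hierarchy, and control the size of the family via \lemmaref{lemma:lll}---is the right frame, and you correctly identify the locality of the ABN11 partitions as the key enabler of the LLL step. However, the construction you describe in step two has a genuine gap that the paper's proof is designed specifically to avoid.

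\textbf{The bottom-up merge does not produce $\Delta_i$-bounded clusters at factor-2 scales.} If the level-$(i+1)$ clusters have diameter at most $D_{i+1}$ and you merge those whose centers land in the same cluster of a $\Delta_i$-bounded ABN11 partition, the resulting level-$i$ cluster has diameter up to $\Delta_i + 2D_{i+1}$. With $\Delta_{i+1}=\Delta_i/2$, this recursion gives $D_{B-j} \approx (j+1)\Delta_{B-j}$, i.e.\ the overhead grows with the number of scales and you never recover a $2$-HP with $\Delta_i$-bounded levels. This is precisely why \cite{CGMZ05} use scales decreasing by a factor $\Theta(\log\lambda)$ rather than $2$ (making the $2D_{i+1}$ term negligible), and it is the reason their padding loses a $\log\lambda$ factor. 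Your proposal asserts ``$\Delta_i$-bounded'' without addressing this, and as stated it fails.

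\textbf{What the paper does instead.} Within a block of $B=O(\log\alpha)$ consecutive scales, the paper builds the nested family by \emph{intersecting} (cutting down) the independently drawn ABN11 partitions $P_0,\dots,P_B$, setting $\hat P_i = \{\hat C\cap C : \hat C\in\hat P_{i-1},\ C\in P_i\}$. Cutting never increases diameters, so each $\hat P_i$ is $\Delta_i$-bounded exactly. The cost of intersection is paid in the padding probability: $x$ is $\eta(\delta)\Delta_i$-padded in $\hat P_i$ only if $B(x,\eta(\delta)\Delta_i)$ survives the cut at every $l\leq i$, i.e.\ is padded in $P_l$. But the required radius $\eta(\delta)\Delta_i$ is only an $\eta(\delta)/2^{i-l}$ fraction of $\Delta_l$, so (using the explicit dependence of ABN11's padding probability on the desired padding fraction together with locality) the conditional padding probabilities telescope: $\prod_{l\le i}\delta^{1/2^{i-l}}\ge\delta^2$. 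This $\delta^2$---not $\delta$, as in your estimate---is what goes into the LLL, with dependency degree $\lambda^{O(\log(1/\eta)+B)}=\lambda^{O(\log\alpha)}$; that is where the $\log\alpha$ factor in the size bound actually comes from, not from a ``window of scales that meaningfully influence'' the event, which is hard to make precise in the merge picture. Finally, bottom-up merging is used only \emph{between} blocks separated by $B$ scales: there $\Delta_{2B}/\Delta_j\le 2^{-B}$ is small enough that the merge blowup is an explicit constant $c=1+1/(2^{B-1}-1)$, absorbed by working with slightly shrunk target diameters $\tilde\Delta_j=\Delta_j/c$. Two interleaved families $\mathcal H,\mathcal R$ then cover all scales. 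You will need these ideas---cutting inside blocks, the telescoping probability bound, and merging only across $\Omega(\log\alpha)$-scale gaps---to make the proof go through.
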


Note that taking $\alpha=O(\log \lambda)$, we obtain a hierarchical family with padding $\Omega(1/\log \lambda)$, of size $O(\log \lambda \log \log \lambda)$, which is an improvement over the result of \cite{CGMZ05}: $O(\log \lambda)$-hierarchical partitions with padding $\Omega(1/\log\lambda)$, of the same size. They construct a family of hierarchies, where each hierarchy is constructed in a bottom-up manner: the clusters of larger diameters are the union of the clusters of lower diameters. Preserving the padding parameter requires the diameters of the clusters to increase by a factor of $O(\log \lambda)$, thus covering only $\log \log \lambda$ of all the distance scales in the metric space. This results in $O(\log^2 \lambda)$ distortion. Using the Lovasz Local Lemma they were able to bound the size of this family.

In our construction, we combine the bottom-up union of clusters technique with an {\it intersection of clusters} procedure. Essentially, there are two steps. First, we use the locally padded partitions of Lemma \ref{claim:local_one_scale} to create a padded hierarchy with diameters decreasing by a constant factor, by intersecting the clusters of  levels of the hierarchy of larger diameter. Using the locality property and the fact that the padding parameter depends on the success probability, we show that using  $\log \log \lambda$ such levels of partitions with diameters increasing by factor $2$, results in a $2$-hierarchy with $\Omega(1/\alpha)$ padding, thus covering the $\log\log \lambda$ scales uncovered by the construction of \cite{CGMZ05}. We apply the Lovasz Local Lemma to bound the size of the family of such hierarchies by $O\left(\lambda^{1/\alpha}\log \alpha \log \lambda \right)$. Second, we combine the hierarchies obtained by cutting clusters, in a bottom-up manner, by defining higher scales clusters as the union of lower level clusters,  thus obtaining a hierarchy with diameters decreasing by a factor of 2  in all its levels, while padding is $\Omega(1/\alpha)$.

To prove Theorem \ref{thm:main:doubl:larg:dits}, we consider hierarchical partitions that cover a range of scales:
for any $\Delta$ and an integer $B$ we build a family $\{H^{j}\}_{j\geq 1}$, where each $H^j$ is a $\mu$-HP$_{B}$$(\Delta)$. The padding property is then required to hold for all points $x\in X$ and for all scales $\Delta_i \in [\Delta, \Delta_B]$. We call such family as $(\eta, \mu)$-HPF for range $[\Delta, \Delta_B]$.

The following lemma is used as a subroutine in the construction of the hierarchical family:
\begin{lemma}\label{claim:block:B}
Let $X$ be a finite metric space with doubling constant $\lambda$ . For a given $\alpha \geq 2$, $\Delta \leq \diam(X)$ and an integer $ 1\leq B \leq \log_{\mu}\Phi(x)$, there exists an $(\Omega(1/\alpha), 2)$-HPF for range $[\Delta, \Delta_B]$, of size $O\left( \lambda^{1/\alpha}\log \lambda (\log \alpha + B) \right)$.
\end{lemma}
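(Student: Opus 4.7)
The plan is to construct the hierarchical family via an \emph{intersection} technique, with size controlled by the Lovász Local Lemma. For each scale $i \in \{0,\ldots,B\}$ and each hierarchy index $j \in [k]$, I will independently draw a $\Delta_i$-bounded, locally padded random partition $R_i^{(j)}$ from the distribution of \lemmaref{claim:local_one_scale}, and define the $j$-th hierarchy by the nested intersection $P_i^{(j)} := R_0^{(j)} \cap R_1^{(j)} \cap \cdots \cap R_i^{(j)}$. Each $P_i^{(j)}$ is $\Delta_i$-bounded (inherited from $R_i^{(j)}$) and refines $P_{i-1}^{(j)}$ by construction, so $H^{(j)}$ is a valid $2$-HP on the range $[\Delta,\Delta_B]$. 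For $H^{(j)}$ to pad $B(x,\eta\Delta_i)$ at scale $i$, every $R_{i'}^{(j)}$ with $i' \le i$ must satisfy $R_{i'}^{(j)}(x) \supseteq B(x,\eta\Delta_i)$. Applying \lemmaref{claim:local_one_scale} to $R_{i'}^{(j)}$ with padding fraction $\eta/2^{i-i'}$ (relative to $\Delta_{i'}$) gives success probability at least $\exp(-c\eta\log\lambda/2^{i-i'})$ for a constant $c$. Independence across $i'$ together with the telescoping geometric series $\sum_{l\ge 0}2^{-l}=2$ then yields per-hierarchy padding probability at scale $i$ of at least $\lambda^{-C\eta}$ for a constant $C$; choosing $\eta = 1/(C\alpha) = \Omega(1/\alpha)$ gives per-hierarchy success probability $\delta^* := \lambda^{-1/\alpha}$.

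To bound the family size via \lemmaref{lemma:lll}, let $\mathcal{E}_{x,i}$ denote the event ``no hierarchy pads $B(x,\eta\Delta_i)$'', where $x$ ranges over an $\Omega(\eta\Delta_B)$-net $N$ of $X$ (this reduction suffices, since padding at a net-point implies padding at any nearby point up to an $O(1)$ factor in $\eta$). By independence across hierarchies, $\Pr[\mathcal{E}_{x,i}] \le (1-\delta^*)^k \le e^{-k\delta^*}$. The locality property of \lemmaref{claim:local_one_scale} implies that $\mathcal{E}_{x,i}$ is conditionally determined by the restriction of each $R_{i'}^{(j)}$ to the ball $B(x,2\Delta_{i'})$, for $i' \le i$ and $j \in [k]$. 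Hence $\mathcal{E}_{x,i}$ and $\mathcal{E}_{y,l}$ can be dependent only if $d(x,y) \le 4\Delta_{i'}$ for some shared scale $i' \le \min(i,l)$, and in particular only if $d(x,y) \le 4\Delta$. The doubling property bounds $|N \cap B(x,4\Delta)| \le (4\Delta/(\eta\Delta_B))^{O(\log\lambda)} = \lambda^{O(B+\log\alpha)}$, so the dependency degree is at most $\lambda^{O(B+\log\alpha)}(B+1)$. The LLL condition $e\cdot e^{-k\delta^*}\cdot \lambda^{O(B+\log\alpha)}(B+1) \le 1$ is then satisfied by $k = O(\lambda^{1/\alpha}\log\lambda\,(\log\alpha + B))$, as required.

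The main obstacle is ensuring that the per-hierarchy padding probability is $\lambda^{-O(1/\alpha)}$ \emph{uniformly across all $B+1$ scales}: a naive analysis in which each of the $B+1$ intersected partitions independently contributes a $\lambda^{-\Omega(1/\alpha)}$ factor would collapse the per-hierarchy success to $\lambda^{-\Omega(B/\alpha)}$ and blow up the family size exponentially in $B$. The decisive saving is that \lemmaref{claim:local_one_scale} provides a \emph{continuous} tradeoff between padding fraction and success probability: at a coarse scale $i'<i$, one only needs to pad the much smaller ball $B(x,\eta\Delta_i)$ of relative fraction $\eta/2^{i-i'}$, and the corresponding success probability approaches $1$ geometrically, so the product over scales telescopes. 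A secondary subtlety is the $\log\alpha$ term in the dependency count, which arises because the finest relevant net must resolve padding balls of radius $\Theta(\Delta_B/\alpha)$ at the finest scale — this is exactly the source of the $\log\alpha$ in the final size bound.
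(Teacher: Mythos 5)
Your proposal follows essentially the same approach as the paper's proof: intersect independently drawn locally padded partitions across the $B+1$ scales to form each hierarchy, exploit the continuous padding--probability tradeoff of \lemmaref{claim:local_one_scale} so that the required padding fraction shrinks geometrically at coarser scales and the per-hierarchy success probability telescopes to $\lambda^{-\Theta(1/\alpha)}$, then apply the Lovász Local Lemma over an $\Theta(\eta\Delta_B)$-net with dependency degree $\lambda^{O(\log\alpha + B)}$. The one place to be slightly more careful is the phrase ``conditionally determined by the restriction to $B(x,2\Delta_{i'})$'': the locality property does not say the padding event is a function of those restrictions, only that the conditional success probability is preserved given any fixed partition outside that ball, which is exactly the form needed by the stated version of the LLL -- the conclusion is correct but the justification should invoke the conditional-probability form of the hypothesis rather than determinism.
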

\begin{proof}
For a given distortion $\alpha \geq 2$, let $\delta = \lambda^{-1/(2\alpha)}$. Therefore, for such $\delta$ we have $\eta(\delta):=\frac{\log (1/\delta)}{2^6\log \lambda}=2^{-7}/\alpha$. Also note that for any $\alpha \geq 1$, it holds that $\delta\in[\lambda^{-2^{12}},1]$. Thus, we will show that there exists a hierarchical family with padding $\Omega(\eta(\delta))$, of size $k:=O(\left( \lambda^{1/\alpha}\log \lambda (\log \alpha + B) \right))$.

Let $N \subseteq X$ be an $(\eta(\delta)\Delta_B/4)$-net of $X$. We show the claim is true for $N$ and the extension of it to $X$ is immediate, with a constant factor loss in distortion. In the sequel, all the balls are balls of metric space $N$.

Let $\Delta_i=\Delta/2^i$, for all $0\leq i \leq B$.
Consider the following random process: For each scale $\Delta_i$ in the range $[\Delta, \Delta_B]$, independently generate $\Delta_i$-bounded partitions $P_0, \ldots P_B$ of $N$ by invoking the locally padded probabilistic decomposition of Lemma \ref{claim:local_one_scale}. To obtain a $2$-Hierarchical Partition $H$ for the scales $[\Delta_0, \Delta_B]$ we cut all the clusters of all the partitions, to get $\Delta_i$-bounded nested partitions $\hat{P}_0, \ldots, \hat{P}_B$. Let $\hat{P}_0=P_0$, for all $i \geq 1$, define
$\hat{P}_i= \cup_{\hat{C} \in \hat{P}_{i-1}} \cup_{C \in P_i} C \cap \hat{C}$.

Now, independently repeat the above random process $k$ times to obtain a randomly generated family $H^{(1)}, \ldots, H^{(k)}$ of $2$-Hierarchical Partitions of the net $N$, for range $[\Delta, \Delta_B]$. Each hierarchical partition $H^{(t)}$ consists of $\Delta_i$-bounded partitions, denoted by $\hat{P}_i^{(t)}$.

For each $x \in N$ and for each scale $\Delta_i \in [\Delta, \Delta_B]$, let $\mathcal{E}_{x,i}$ be an event that the ball $B(x, \eta(\delta)\Delta_i)$ is not padded at the $i$-th level partition $\hat{P}_i^{(t)}$ in any of the hierarchical partitions $H^{(1)}, \ldots, H^{(k)}$.
We use the Lovasz Local Lemma (Lemma \ref{lemma:lll}) to prove that for the chosen value of $k$, $\Pr\left[{\bigcap}_{\substack{x \in X,\\ 0\leq i \leq B}} \neg{ \mathcal{E}_{x,i}}\right]>0$.
Let $G=(V, E)$ be a directed graph with $V=\{\mathcal{E}_{x,i}\}$, for all $x\in N$ and $ 0\leq j \leq B$. The vertex $\mathcal{E}_{x,i}$ is connected with an out-edge with all the verticies $\mathcal{E}_{y,j}$, such that $y \in B(x, 2\Delta)$ and $ 0\leq j \leq B$. We prove the following lemma:
\begin{lemma}\label{appendix:lemma:hp}
For all $Q \subseteq N\setminus B(x,2\Delta)$, for all $J \subseteq [0, B]$,
$Pr[ \mathcal{E}_{x,i} \;|\;\bigcap_{\substack{y \in Q, \\ j \in J }} \neg \mathcal{E}_{y,j}] \leq (1-\delta^2)^k$. \end{lemma}

\begin{proof}
Consider a random $H^{(t)}$ in the family we have generated, let $P_0^{(t)}, \ldots, P_B^{(t)}$ be the random $\Delta_i$-bounded partitions that were generated to build $H^{(t)}$, i.e. these are partitions before we cut them to get $H^{(t)}$. Let $\hat{P}_0^{(t)}, \ldots, \hat{P}_B^{(t)}$ be the resulting partitions after the cut of clusters procedure. For each $x \in N$ and for each $\Delta_i$, define the following events:
$\hat{A}_{x,i}^{(t)}:=B(x, \eta(\delta)\Delta_i) \subseteq \hat{P}_i^{(t)}$; for $l \leq i$, $A_{x,l,i}^{(t)}:=B(x, \eta(\delta))\Delta_i \subseteq P_l^{(t)}$.

Note that since $H^{1}, \ldots, H^{B}$ were generated independently, and since for each $t$, for all $Q \subseteq N\setminus B(x,2\Delta)$ and for all $J \subseteq [0, B]$:
$\{\neg\mathcal{E}_{y,j}\;|\;y\in Q, j\in J \} \subseteq \{\hat{A}_{y,j}^{(t)}\;|\;y\in Q, j\in J \}$,
 it is enough to prove that
\[\Pr[\hat{A}_{x,i}^{(t)}\;|\;{\bigcap}_{\substack{y \in Q,\\ j \in J}} \hat{A}_{y,j}^{(t)}] \geq \delta^2.\]
Thus,
\[\Pr[\hat{A}_{x,i}^{(t)}\;|\;{\bigcap}_{\substack{y \in Q,\\ j \in J}} \hat{A}_{y,j}^{(t)}]=\Pr[ \bigcap_{0\leq l \leq i}A_{x,l,i}^{(t)}\;|\;\bigcap_{\substack{y \in Q, \\ j \in J}} \bigcap_{l' \leq j} A_{y,l', j}^{(t)} ]= \Pr[ \bigcap_{0\leq l \leq i}A_{x,l,i}^{(t)}\;|\; \bigcap_{\substack{y \in Q, \\  0\leq l' \leq B}} \bigcap_{j \in J: j\leq l'} A_{y, l', j}^{(t)}]\]

\[\geq \Pr[ \bigcap_{0\leq l \leq i}A_{x,l,i}^{(t)}\;|\; \bigcap_{\substack{y \in Q, \\  0\leq l' \leq B}} A_{y, l', l'}^{(t)}] = \prod_{0 \leq l \leq i}\Pr[A_{x,l,i}^{(t)}\;|\;\bigcap_{\substack{y \in Q, \\  0\leq l' \leq B}} A_{y, l', l'}^{(t)}],\]
where the last equality holds due to independence of $A_{x,l,i}^{(t)}$ for different $l$. Note that for all $Q \subseteq N\setminus B(x, 2\Delta)$ it must be $Q \subseteq N\setminus B(x, 2\Delta_l)$.  Therefore:
 \[\prod_{0 \leq l \leq i}\Pr[A_{x,l,i}^{(t)}\;|\;\bigcap_{\substack{y \in Q, \\  0\leq l' \leq B}} A_{y, l', l'}^{(t)}] \geq \prod_{0 \leq l \leq i}\Pr[ A_{x,l,i}^{(t)}\;|\;\bigcap_{y \in Q}A_{y,l,l}^{t} ].\]
For all $ 0\leq l \leq i$, let $\delta_l^{(i)}$ be such that $\eta(\delta_{l}^{(i)})\Delta_l=\eta(\delta)\Delta_i$. Then, 
$A_{x,l,i}^{t}:=\{B(x, \eta(\delta)\Delta_i) \subseteq P_l^{(t)}(x)\}=\{B(x, \eta(\delta_l^{i})\Delta_l) \subseteq P_l^{t}(x)\}$. The locality property of the partitions of \cite{ABN11} states that for all $Q \subseteq N\setminus B(x, 2\Delta_l)$,
$\Pr[A_{x,l,i}^{(t)}\;|\; \bigcap_{y \in Q}A_{y,l,l}^{(t)}] \geq \delta$, therefore,
\[\prod_{0 \leq l \leq i}\Pr\left[ A_{x,l,i}^{(t)}\;|\;\bigcap_{s}A_{y,l,l}^{t} \right] \geq \prod_{ 0\leq l \leq i}\delta_l^{(i)}.\]
Recall that $\eta(\delta_{l}^{(i)})\Delta_l=\eta(\delta)\Delta_i$, i.e. $\delta_{l}^{i}=\delta^{1/2^{i-l}}$. Therefore,
$\prod_{0\leq l \leq i} \delta_l^{(i)} \geq \delta^2$,
which completes the proof of the lemma.

\end{proof}

Thus, for $\delta=\lambda^{-1/(2\alpha)}$, and $k=O\left( \lambda^{1/\alpha}\log \lambda (\log \alpha + B) \right)$, the above lemma implies 
$(1-\delta^2)^k \leq e^{-\delta^2k}\leq \lambda^{-\Theta(\log \alpha + B)}$.
In addition, the out degree $d$ of $G$ is bounded by
\[d= B \cdot |N \cap B(x, 2\Delta)| = B \cdot O\left(\left(\frac{\Delta}{\eta(\delta) \Delta_B} \right)^{\log \lambda}\right) = B \cdot \lambda^{O(\log(1/\eta(\delta)) +B )} = \lambda^{O(\log(1/\eta(\delta)) +B )}.\]
Thus, the LLL can be applied to conclude the proof of Lemma \ref{claim:block:B}.
\end{proof}

\begin{proof}[Proof of Theorem \ref{thm:main:doubl:larg:dits}]
Let $\Phi=\Phi(X)$, $\Delta_0=d_{\max}(X)$, and for all $ 1\leq i \leq \log \Phi$, $\Delta_i=\Delta_0/2^i$. Let $\mathcal{I}=\{\Delta_i\;| 0\leq \;i \leq \log \Phi\}$. We build a small family of $2$-HP's, such that the padding property is satisfied for all points in $X$ and for all scales $\Delta_i \in \mathcal{I}$, with padding parameter $\Omega(1/\alpha)$. Let $B= \lceil\log (2\alpha/c') \rceil$, where $c' < 1$ will be defined later,  and let $k=O\left(\lambda^{1/\alpha}\log \alpha \log \lambda \right)$.
We will build a family $\mathcal{H} \cup\mathcal{R}$ such that:
\begin{inparaenum}
\item $\mathcal{H}$ is a collection of size $k$ of $2$-HP's. For\footnote{For the simplicity of representation, we assume that $B$ is integer and that $\log \Phi$ is a multiple of $B$.}
$\mathcal{I}_{\mathcal{H}}:= \left\{\Delta_j\;|\; \Delta_j \in \bigcup_{ 0\leq l \leq L } \left[ \Delta_{2lB}, \Delta_{(2l+1)B}\right] \right\} \subseteq \mathcal{I}$,
where $L=\log ({\Phi^{1/(2B)}})-1/2$, the following padding property holds:
 For all $x \in X$ and for all scale $\Delta_j \in \mathcal{I}_{\mathcal{H}} $,  there is a $2$-HP $H \in \mathcal{H}$ such that $B(x, \Omega(1/\alpha)\Delta_j) \subseteq P_j(x)$, for the $j$-th level partition $P_j \in H$. \item $\mathcal{R}$ is a collection of size $k$ of $2$-HP's. The padding property as for $\mathcal{H}$ holds for all $x \in X$ and for all scales $\Delta_j \in \mathcal{I}_{\mathcal{R}}:=\mathcal{I}\setminus \mathcal{I}_{\mathcal{H}}$.
\end{inparaenum}

Namely, the hierarchical partitions of $\mathcal{R}$ are padded for the scales that are not padded in the partitions of $\mathcal{H}$. Thus, together these two collections constitute an $\Omega(1/\alpha)$-padded $2$-HP Family for $X$, of size $2k$.
We describe the construction of $\mathcal{H}$, while $\mathcal{R}$ is constructed  similarly.

Let $\mathcal{H}=H^{(1)}, \ldots, H^{(k)}$ denote the set of $2$-HP's. We construct it iteratively in a bottom up fashion. Assume by induction that we have already constructed a family $\hat{\mathcal{{H}}}= \hat{H}^{(1)}, \ldots, \hat{H}^{(k)}$ such that: Each $\hat{H}^{(t)}$ is a $2$-HP for range $[\Delta_{2B}, \Delta_{(2L+1)B}]$;  The padding property holds with parameter $\Omega(1/\alpha)$ for all scales $\Delta_j \in \mathcal{I}_{\mathcal{H}}\setminus \{ \Delta_i \in [\Delta_0, \Delta_B]\}$.

Let $c=1+\frac{1}{2^{B-1}-1}$,  by Lemma \ref{claim:block:B} there is $(\Omega(1/\alpha),2)$-HPF ${F}^{(1)}, \ldots, {F}^{(k)}$ for range $[\tilde{\Delta}_0, \tilde{\Delta}_B]$, where $\tilde{\Delta}_j=\Delta_j/c$, for $ 0\leq j \leq B$.   For each $ 1\leq t \leq k$, ${H}^{(t)}$ is obtained by adding the partitions of ${F}^{(t)}$ to $\hat{H}^{(t)}$ in the following way. Let $\hat{P}_{2B}^{(t)}$ denote the $\Delta_{2B}$-bounded partition of $\hat{H}^{(t)}$. First, for all scale $\Delta_j \in [\Delta_{B+1}, \Delta_{2B}]$ add to $H^{(t)}$, $\Delta_j$-bounded partition $P_j^{(t)}:=\hat{P}_{2B}^{(t)}$ (these artificial partitions are added to have a well defined $2$-HP family). Next, let $\{\tilde{P}_j^{(t)}\}_{ 0 \leq j \leq B}$  denote the set of $\tilde{\Delta}_j$-bounded partitions of $F^{(t)}$. For all $j$ starting from $j=B$ down to $j=0$, the partition $P_j^{(t)}$ is constructed a s follows: for each $\tilde{C} \in \tilde{P}_j^{(t)}$, add a cluster $C$ to $P_j^{(t)}$,   defined by $C=\cup \{C' \in P_{2B}^{(t)}\;|\; \mbox{center of}\; C' \in \tilde{C}\}$.
Finally, the partitions of $\hat{H}^{(t)}$ are unchanged.

For all $ B\leq j \leq 2B$, $P_j{(t)}$ is $\Delta_j$-bounded, since $\Delta_{2B} \leq \Delta_j$.
For $ 0\leq j \leq B$ the diameter of each cluster in partition $P_j^{(t)}$ is bounded by $\tilde{\Delta}_j + 2\Delta_{2B} = \Delta_j/c + \Delta_{B}/2^{B-1} \leq \Delta_j\left(  1/c + 1/2^{B-1}\right)=\Delta_j$, for a chosen value of $c$. In addition, by the construction, the partitions $P_j^{(t)}$ form a hierarchy.
It is left to show that the padding property holds in $\mathcal{H}$ for the scales $[\Delta_0, \Delta_{B}]$. By Lemma \ref{claim:block:B}, for any $x \in X$, for any $\tilde{\Delta}_j \in [\tilde{\Delta}_0, \tilde{\Delta}_{B}]$, there is $F^{(t)}$ such that $B(x, (c'/\alpha)\tilde{\Delta}_j) \subseteq \tilde{P}_{j}^{(t)}(x)$, for $\tilde{P}_j^{(t)} \in F^{(t)}$, for some constant $c'$.  Consider some cluster $P_{j}^{(t)}(x)$, for some $x \in X$. In the process of constructing $P_{j}^{(t)}(x)$ some points from $\tilde{P}_j^{(t)}(x)$ may be removed, due to removal of some cluster $C' \in P_{2B}^{(t)}$ whose center falls outside the cluster $\tilde{P}_j^{(t)}(x)$.
For $B$ as defined above, for $r = \frac{c'}{\alpha}\tilde{\Delta}_j - {\Delta_{2B}} \geq \left(\frac{c'}{2\alpha}\right)\Delta_j$, we have that $B(x, r) \subseteq {P}_{j}^{(t)}(x)$. This completes the proof.

\end{proof}

\begin{theorem}\label{thm:main:dits:doubl}
For any finite metric space $(X,d)$ with doubling constant $\lambda$, for any $\alpha \geq 2$, there is a tree cover of $X$ with distortion $O(\alpha)$ and of size $O\left(\lambda^{1/\alpha}\log \alpha \log \lambda \right)$.
\end{theorem}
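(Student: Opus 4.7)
The plan is essentially to combine the two main building blocks that were just established: Theorem \ref{thm:main:doubl:larg:dits}, which constructs a small hierarchical padded partition family, and Lemma \ref{lemma:cover_from_HP}, which converts any such family into a tree cover with a controlled distortion--size tradeoff. Since all the technical work has been done in proving Theorem \ref{thm:main:doubl:larg:dits}, the final step is a direct instantiation.

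First, I would apply Theorem \ref{thm:main:doubl:larg:dits} with the given parameter $\alpha \geq 2$ to obtain an $(\eta, \mu)$-HPF $\mathcal{H}$ of $X$ with $\eta = \Omega(1/\alpha)$, $\mu = 2$, and size $|\mathcal{H}| = O(\lambda^{1/\alpha} \log \alpha \log \lambda)$. Each $H^{(j)} \in \mathcal{H}$ is a $2$-hierarchical partition of $X$, covering all scales $\Delta_i \in \{d_{\max}(X)/2^i\}$, such that every point $x \in X$ has, at every scale $i$, some hierarchy $H^{(j)}$ in which the ball $B(x, \eta \Delta_i)$ is fully contained in the level-$i$ cluster of $x$.

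Next, I would convert $\mathcal{H}$ into a tree cover by the natural map from HPFs to dominating HSTs described before Lemma \ref{lemma:cover_from_HP}: each $H^{(j)}$ yields a dominating $2$-HST $T_j$ on the leaves $X$, with internal node at level $i$ labeled $\Delta_i$. Invoking Lemma \ref{lemma:cover_from_HP} with $\eta = \Omega(1/\alpha)$ and $\mu = 2$ produces a $(\mu/\eta, |\mathcal{H}|)$-tree cover, that is, a tree cover with distortion $2/\eta = O(\alpha)$ and size $O(\lambda^{1/\alpha} \log \alpha \log \lambda)$, as desired.

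The only thing to double-check is that the hidden constants behave: Theorem \ref{thm:main:doubl:larg:dits} gives $\eta = c/\alpha$ for some absolute constant $c$, so $\mu/\eta = 2\alpha/c = O(\alpha)$, and the size bound is exactly what the theorem statement requires. There is no real obstacle here since Lemma \ref{lemma:cover_from_HP} is already proved and Theorem \ref{thm:main:doubl:larg:dits} has been established; the main difficulty was hidden in the construction of the small padded hierarchical family, which is where the Lovász Local Lemma argument and the combination of the cut-clusters and bottom-up-union techniques were needed.
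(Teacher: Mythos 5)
Your proposal is correct and matches the paper's proof exactly: the paper also proves this theorem by applying Lemma~\ref{lemma:cover_from_HP} to the $(\Omega(1/\alpha),2)$-HPF from Theorem~\ref{thm:main:doubl:larg:dits}. The only difference is that you spell out the parameter bookkeeping ($\mu/\eta = O(\alpha)$), which the paper leaves implicit.
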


\begin{proof}
Apply  Lemma~\ref{lemma:cover_from_HP} on the Hierarchical Family of Theorem \ref{thm:main:doubl:larg:dits}.
\end{proof}
Note that the tree cover of Theorem \ref{thm:main:dits:doubl} can be deterministically constructed in polynomial time via the constructive local lemma due to \cite{MT09}.

\section{Lower Bounds }\label{sec:lower}

\subsection{Ramsey Tree Covers for Planar and Doubling Metrics }
In this section we use a lemma of \cite{BLMN03} to show a nearly matching lower bound for Ramsey tree covers of planar and doubling metrics in the low distortion regime.
\begin{theorem}\label{thm:lower_bound_small_new} For any parameters $n,\alpha> 2$, there exists a planar $n$-point metric space with constant doubling dimension, such that any Ramsey tree cover of it with distortion $\alpha$ must have at least $k=n^{\Omega\left(\frac{1}{\alpha \log \alpha} \right)}$ trees.
\end{theorem}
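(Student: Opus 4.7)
The plan is to reduce the theorem to a metric Ramsey-type rigidity lemma from \cite{BLMN03}. That paper exhibits, for every $n$, a series-parallel (hence planar) metric $M_n$ of constant doubling dimension with the following property: there is an absolute constant $c>0$ such that for every $\alpha > 2$ and every subset $S \subseteq M_n$, if there is a dominating tree metric $T$ on $M_n$ with $d_T(u,v) \le \alpha \cdot d_{M_n}(u,v)$ for all pairs $(u,v) \in S \times M_n$, then $|S| \le n^{\,1 - c/(\alpha \log \alpha)}$. The construction in \cite{BLMN03} is a hierarchical (laminar) product whose rigidity carries over both to arbitrary tree metrics and to the Ramsey-style requirement on pairs in $S \times M_n$ (not merely on $S \times S$). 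This $M_n$ will be the target metric in the theorem.

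Given this, the theorem follows by a single pigeonhole step applied to a hypothetical Ramsey $(\alpha,k)$-tree cover $T_1,\ldots,T_k$ of $M_n$. By definition, each $u \in M_n$ has a ``home'' tree $T_{i(u)}$ that has distortion at most $\alpha$ for every pair $(u,v)$ with $v \in M_n$. Since each of the $n$ points is assigned to one of $k$ trees, there is some tree $T^*$ whose home set $S := \{u : i(u) = *\}$ satisfies $|S| \ge n/k$. The tree $T^*$ is then a dominating tree that $\alpha$-preserves every pair in $S \times M_n$, so the rigidity lemma above applies and gives $n/k \le |S| \le n^{\,1 - c/(\alpha \log \alpha)}$. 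Rearranging, $k \ge n^{\,c/(\alpha \log \alpha)} = n^{\Omega(1/(\alpha \log \alpha))}$, as required.

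The main obstacle is pinning down the exact statement from \cite{BLMN03} in the form needed here. Their published lower bound is typically phrased for embeddings of $S$ into an \emph{ultrametric}, with distortion measured on $S \times S$; here we need the version for arbitrary \emph{tree} metrics with the Ramsey requirement on $S \times M_n$. If the stated theorem is not already in this strong form, I will either (a) verify that the inductive contraction argument in \cite{BLMN03} (which exploits the hierarchical gaps in $M_n$) applies verbatim once one observes that an $\alpha$-preserving tree on $S \times M_n$ must respect each hierarchical level of $M_n$, or (b) root the tree $T^*$ at an appropriately chosen point and argue that the resulting rooted tree metric can be replaced by an ultrametric on $S$ with only a constant-factor loss in distortion, using that the pairs of $S$ are ``linked'' to $M_n$ on every scale. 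Either route preserves the exponent $\Omega(1/(\alpha\log\alpha))$, yielding the theorem.
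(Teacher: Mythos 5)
Your proof takes essentially the same route as the paper: the home-tree pigeonhole argument (formalized as Claim~\ref{claim:lower_bound_red}) reduces the Ramsey tree cover lower bound to a metric Ramsey upper bound for subsets embedding into trees, instantiated with the \cite{BLMN03} bound for $Y=[C_N]^t$ where $N=\Theta(\alpha)$ is chosen as a function of $\alpha$. Two small remarks: the target metric does need to depend on $\alpha$ (through the cycle length $N$), and your concern about the $S\times M_n$ form of the rigidity lemma is moot -- restricting the home tree $T^*$ to $S$ already yields a dominating tree metric with distortion $\alpha$ on $S\times S$, so the standard $S\times S$ version of the \cite{BLMN03} bound suffices and neither of your workarounds (a), (b) is needed.
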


The following claim states that any lower bound for the standard metric Ramsey problem implies lower bound for Ramsey tree cover constructions:
\begin{claim}\label{claim:lower_bound_red}
Let $Y$ be an $n$ point metric space such that the largest subspace of $Y$ which embeds into a tree metric with distortion $\alpha >1$ has size at most $g(n, \alpha)$. Then, any Ramsey tree cover of $Y$ with distortion $\alpha$ must have $k \geq \frac{n}{g(n,\alpha)}$ trees.
\end{claim}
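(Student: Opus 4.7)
The plan is a straightforward pigeonhole argument. Given a Ramsey $(\alpha,k)$-tree cover $T_1, \ldots, T_k$ of $Y$, the definition assigns to each point $x \in Y$ a "home tree" $T_{i(x)}$ such that $d_{T_{i(x)}}(x,v) \leq \alpha \cdot d_Y(x,v)$ for every $v \in Y$ (and since trees are dominating, $d_{T_{i(x)}}(x,v) \geq d_Y(x,v)$ as well). This induces a partition of $Y$ into at most $k$ groups $Y_1, \ldots, Y_k$ according to the index of the home tree.

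By pigeonhole, some group $Y_j$ satisfies $|Y_j| \geq n/k$. I would then argue that the subspace $Y_j$ embeds into the single tree $T_j$ with distortion $\alpha$: for any pair $u,v \in Y_j$, both $u$ and $v$ have $T_j$ as their home tree, so in particular $d_Y(u,v) \leq d_{T_j}(u,v) \leq \alpha \cdot d_Y(u,v)$. Thus $Y_j$ is a subspace of $Y$ of size at least $n/k$ that embeds into a tree metric with distortion $\alpha$.

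Applying the hypothesis on $g(n,\alpha)$ then gives $n/k \leq |Y_j| \leq g(n,\alpha)$, which rearranges to $k \geq n / g(n,\alpha)$, as claimed.

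There is no real obstacle here; the only thing worth noting is that the definition of Ramsey tree cover already gives each point a single tree that handles all pairs involving it, which is exactly what is needed to ensure that the restriction of the home tree to the group $Y_j$ preserves all internal distances up to factor $\alpha$. Without this "same tree works for all pairs at a point" feature (i.e., if we only had an ordinary tree cover), the argument would fail, and indeed the separation highlighted earlier in the paper between tree covers and Ramsey tree covers is precisely what this reduction exploits.
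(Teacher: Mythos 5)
Your proof is correct and takes essentially the same approach as the paper: assign each point to a home tree, apply pigeonhole to find a group of size at least $n/k$, and observe that this group embeds into its common home tree with distortion $\alpha$. The closing remark about why the Ramsey property (as opposed to a plain tree cover) is essential is a nice touch but matches the paper's implicit reasoning.
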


\begin{proof}
Let $T_1, \ldots, T_k$ be a Ramsey tree cover of $Y$ with distortion $\alpha$. Namely, for each point $u \in Y$ there is at least one $T_i$ that has distortion $\alpha$ for all pairs $u,v$, for all $v\in Y$; we will say that $T_i$ hosts $u$.  Consider, for some specific $i$, the set of all points hosted by $T_i$. There exists some $i$ such that the size of this set is at least $n/k$. As this set forms a subspace of $Y$ which embeds into a tree, its size is bounded by $g(n,\alpha)$, implying $n/k \leq g(n, \alpha)$.
\end{proof}

\begin{proof}[Proof of Theorem \ref{thm:lower_bound_small_new}]
Let $N$ be the minimal integer such that $N/3-1>\alpha$, and let $C_N$ be the $N$-cycle.  In \cite{BLMN03}, among other results, the authors have shown that taking the metric space $Y=[C_N]^t$, any embedding of $S\subseteq Y$ into a tree with distortion $N/3-1$ must have $|S|\le n^{\delta}$ for $\delta=\frac{\ln (N-1)}{\ln N}$, where $t$ is chosen so that $|Y|=N^t=n$. Using the inequality $\ln (x-1) \leq \ln x -1/x$, we obtain that $|S|\le n^{\frac{\ln N-1/N}{\ln N}}=n^{1-1/(N\ln N)}=n^{1-\Omega\left( \frac{1}{\alpha \log \alpha}\right)}$. Therefore, by  Claim \ref{claim:lower_bound_red} the number of trees in any Ramsey tree cover of $Y$ with distortion $\alpha$ is at least $n^{\Omega\left( \frac{1}{\alpha \log \alpha}\right)}$.
As we noted after the proof of \theoremref{thm:strong_lower_bound}, there exists a planar doubling metric which embeds to $Y$ with constant distortion, hence the lower bound holds for such metrics as well.
\end{proof}

\subsection{Lower Bound on Tree Cover for Doubling and General Metrics}

Here we show a lower bound on tree covers for doubling metrics (and general metrics), indicating that our bound in \theoremref{thm:main:dits:doubl} is nearly tight. To this end, we use the result of \cite{ADDJS93} that there are metrics on $\lambda$ points such that any $\alpha$-spanner requires at least $\Omega(\lambda^{1+1/\alpha})$ edges (and the fact that any metric with $\lambda$ points has doubling constant at most $\lambda$).

\begin{theorem}
For any parameters $n,\lambda,\alpha\ge 1$, there is an $n$-point metric space with doubling constant $\lambda$, such that any $\alpha$-spanner requires at least $\Omega(n\cdot\lambda^{1/\alpha})$ edges.
\end{theorem}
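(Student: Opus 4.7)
The plan is to amplify the single-copy lower bound of \cite{ADDJS93} by placing many disjoint copies along a one-dimensional scaffold whose doubling dimension does not exceed that of a single copy. Let $M$ be the $\lambda$-point metric of \cite{ADDJS93} for which any $\alpha$-spanner requires $\Omega(\lambda^{1+1/\alpha})$ edges; after rescaling we may assume $\diam(M)=1$, and trivially $M$ has doubling constant at most $\lambda$. Set $m=\lceil n/\lambda\rceil$, fix a constant $C=\max(2\alpha,2)$, and take $X$ to be $m$ disjoint copies $M_1,\ldots,M_m$ of $M$ equipped with
\[
d(u,v) = \begin{cases} d_M(u,v) & \text{if }u,v\in M_i,\\ C\cdot|i-j| & \text{if }u\in M_i,\; v\in M_j,\; i\ne j.\end{cases}
\]
A direct case analysis verifies the triangle inequality, using $\diam(M)\le 1\le C$ for the intra-copy parts and the triangle inequality on the index line for the inter-copy parts.

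Next I would check that $X$ has doubling constant $O(\lambda)$; rescaling then makes it at most $\lambda$. The key observation is that every point of $M_j$ lies at the same distance $C|i-j|$ from any point $u\in M_i$, so each ball $B(u,R)$ decomposes as a subset of its home copy together with a collection of entire neighbouring copies indexed by an integer interval. In the regime $R<C/2$ the ball of radius $2R$ is contained in one copy and is covered by $\lambda$ balls of radius $R$ via the trivial doubling of $M$. In the regime $R\ge C/2$, the ball $B(u,2R)$ meets at most $O(R/C)+1$ consecutive copies, while a ball of radius $R$ centred in any copy contains $\Theta(R/C)+1$ consecutive copies whole, so $O(1)$ balls of radius $R$ suffice. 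Taking the maximum over regimes yields doubling constant $O(\lambda)$.

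Finally I would derive the edge lower bound. Fix any $\alpha$-spanner $H$ of $X$ and a copy $M_i$. For $u,v\in M_i$ we have $d(u,v)=d_M(u,v)\le 1$, yet any $u$--$v$ path in $H$ that leaves $M_i$ must traverse at least two inter-copy edges of combined weight $\ge 2C>\alpha\ge\alpha\cdot d_M(u,v)$, and so cannot realise the required $\alpha$-approximation. Therefore the shortest $u$--$v$ path in $H$ stays inside $M_i$, which forces the induced subgraph $H[M_i]$ to itself be an $\alpha$-spanner of $M$ and hence to contain $\Omega(\lambda^{1+1/\alpha})$ edges. Summing the intra-copy edges over the $m=\Theta(n/\lambda)$ copies gives $|E(H)|\ge m\cdot\Omega(\lambda^{1+1/\alpha})=\Omega(n\cdot\lambda^{1/\alpha})$.

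The main obstacle is the doubling-constant bookkeeping: the inter-copy distance rule $C|i-j|$ is chosen precisely so that the cross-copy geometry collapses to a one-dimensional line, which prevents a doubling-dimension blow-up by a $\log m$ factor. More naive arrangements, such as placing all copies at pairwise distance $D$, would inflate the doubling constant to $\Omega(m)=\Omega(n/\lambda)$ and destroy the bound; the linear scaffold is exactly what buys the desired locality while still keeping each copy metrically isolated from the others.
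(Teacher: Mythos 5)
Your construction is essentially the paper's: $\Theta(n/\lambda)$ copies of the $\lambda$-point hard instance of \cite{ADDJS93} arranged along a one-dimensional scaffold, with doubling constant controlled by the fact that all points of a given copy sit at a single common distance from any point of another copy. The one genuine difference is the separation parameter: the paper places copy $i$ and copy $j$ at distance $2|i-j|\cdot\diam(Z)$, whereas you use $C|i-j|$ with $C=\max(2\alpha,2)$ and $\diam(M)=1$. Your choice makes the crucial ``no detour'' step airtight for \emph{every} intra-copy pair, since any path leaving $M_i$ pays $\ge 2C > \alpha\cdot\diam(M)\ge\alpha\cdot d_M(u,v)$; with the paper's separation one has to observe (implicitly) that the pairs which force edges in \cite{ADDJS93} are unit-distance edges of a high-girth graph whose diameter is $\Omega(\alpha)$, so the detour cost $4\diam(Z)$ still exceeds $\alpha\cdot 1$. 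Your version avoids needing that property of the hard instance, at no cost to the doubling argument (the scaffold is one-dimensional regardless of $C$). Both are correct; yours is a bit more self-contained.

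One small slip to fix: rescaling a metric by a constant does not change its doubling constant, so the phrase ``rescaling then makes it at most $\lambda$'' is not literally right. What you have shown is doubling constant $O(\lambda)$, which is all the statement needs (the constant can be absorbed into the $\Omega(\cdot)$, or one can run the construction with $\lambda' = \lambda/O(1)$ points per copy). Also, in the regime $R<C/2$ with $R\ge\diam(M)=1$, a single ball already covers $B(u,2R)\subseteq M_i$; the $\lambda$-ball bound is only needed when $R<1$. This doesn't affect the conclusion, but it's worth stating the case split cleanly.
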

\begin{proof}
Consider the metric $(Z,d_Z)$ on $\lambda$ points such that any $\alpha$-spanner requires at least $\Omega(\lambda^{1+1/\alpha})$ edges. Define $(X,d)$ as the metric created by $n/\lambda$ copies of $Z$ arranged in a line, i.e., the distance between points in copy $i$ to points in copy $j$ is $2|i-j|\cdot\diam(Z)$, while distances between points in the same copy are those of $d_Z$. (More formally, it is the metric composition of the line metric $P_{n/\lambda}$ with $Z$, as defined in Section \ref{sec:lower}). It can be easily checked that $(X,d)$ has doubling constant at most $\lambda$, and any $\alpha$-spanner requires $\Omega(\lambda^{1+1/\alpha})$ edges per copy of $Z$, to a total of $\Omega(n\cdot\lambda^{1/\alpha})$ edges.
\end{proof}
Since a collection of trees in a tree cover is a spanner, and each tree contains less than $n$ edges, we get the following.
\begin{corollary}\label{corr:lower:bound}
There is an $n$-point metric $(X,d)$ with doubling constant $\lambda$, such that any tree cover for $X$ with distortion $\alpha$ requires at least $\Omega(\lambda^{1/\alpha})$ trees.

In particular there is an $n$-point metric such that any tree cover for $X$ with distortion $\alpha$ requires at least $\Omega(n^{1/\alpha})$ trees.
\end{corollary}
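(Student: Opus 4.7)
The plan is a short reduction from the spanner lower bound that was just established. Given the metric $(X,d)$ on $n$ points with doubling constant $\lambda$ produced by the preceding theorem, I would take any tree cover $T_1,\dots,T_k$ of $X$ with distortion $\alpha$ and form the union graph $H=\bigcup_{i=1}^k T_i$. By the definition of a tree cover, every pair $u,v\in X$ has some $T_i$ with $d_{T_i}(u,v)\le\alpha\cdot d(u,v)$; since the trees are dominating, this path is also a path in $H$, so $H$ is an $\alpha$-spanner of $(X,d)$.

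The second ingredient is the trivial edge count: each $T_i$, being a (dominating) tree on at most $n$ vertices, contributes at most $n-1$ edges to $H$, so $|E(H)|\le k(n-1)< kn$. Combining this with the theorem's lower bound $|E(H)|=\Omega(n\cdot\lambda^{1/\alpha})$ on the number of edges of any $\alpha$-spanner yields $kn=\Omega(n\cdot\lambda^{1/\alpha})$, hence $k=\Omega(\lambda^{1/\alpha})$, which is the first statement.

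For the ``in particular'' clause, I would simply observe that every $n$-point metric has doubling constant at most $n$ (a ball of radius $2r$ is trivially covered by the $\le n$ singleton balls of radius $r$ centered at its points). Thus the theorem, applied with $\lambda=n$, furnishes an $n$-point metric on which any $\alpha$-spanner needs $\Omega(n^{1+1/\alpha})$ edges, and repeating the tree-count argument above gives $k=\Omega(n^{1/\alpha})$.

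There is essentially no obstacle here: the whole content has been packed into the preceding theorem, and the corollary is a one-line accounting step (edges in a union of $k$ trees $\le k(n-1)$). The only thing to be a little careful about is the use of dominating trees and the fact that Steiner vertices are not being introduced, so that the union really is a subgraph on the same vertex set and each tree really contributes fewer than $n$ edges; both are guaranteed by the conventions fixed at the start of the paper.
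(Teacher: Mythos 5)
Your proposal is correct and matches the paper's own reasoning: the paper derives the corollary from the preceding spanner lower bound via the same one-line accounting step (a tree cover of size $k$ yields a spanner with fewer than $kn$ edges, which must be $\Omega(n\lambda^{1/\alpha})$). The only small inaccuracy is your remark that the paper's conventions forbid Steiner points—its definition of a dominating tree permits $X\subsetneq V(T)$—but this is harmless since any Steiner tree on $n$ terminals can be pruned to $O(n)$ edges without increasing distances, so the edge count stands up to constants, exactly as the paper implicitly assumes.
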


\subsection{Lower Bound on Ramsey Tree Cover}

In this section we show an asymptotically tight lower bound on Ramsey tree covers, in the regime where the number of trees is small. The lower bound on high girth graphs mentioned above can only give distortion $\Omega(\log_k n)$ for tree covers with $k$ trees. Here we use a different example (which is a planar metric with $O(1)$ doubling dimension) that strengthen the lower bound on the distortion to $\Omega(n^{1/k})$.

We will need the following notion of a composition of metric spaces that was introduced in \cite{BLMN03} (we present here a simplification of the original definition).

\begin{definition}
Let $(S,d_S)$, $(T, d_T)$ be finite metric spaces. For $\beta \geq 1/2$, the $\beta$-composition of $S$ with $T$, denoted by $Z=S_{\beta}[T]$, is a metric space of size $|Z|=|S|\cdot |T|$ constructed by replacing each point $u \in S$  with a copy of $T$, denoted by $T^{(u)}$. Let $\gamma=\frac{\max_{t \neq t' \in T} \{d_T(t, t') \}}{ \min_{s \neq s' \in S}\{ d_S(s, s')\}}$. For $z_i \neq z_j \in Z$ such that $z_i \in T^{(u)}$  and $z_j\in T^{(v)}$ the distance is defined as follows:
if $u=v$, then  $d_Z(z_i, z_j)=\frac{1}{\beta\gamma} \cdot d_T(z_i, z_j)$, otherwise (if $u\neq v$), $d_Z(z_i, z_j)=d_S(u, v) $.
\end{definition}
It is easily checked that the choice of the factor $1/(\beta\gamma)$ guarantees that $d_Z$ is indeed a metric.
For a finite metric space $S$ and an integer $t \geq 1$, let $[{S}]_{\beta}^t$ denote the metric space obtained by $\beta$-composition of $S$ with itself $t$ times.
The following theorem asserts that when the number of trees is small our upper bound on the distortion of Ramsey tree covers is tight up to a logarithmic factor.  Although the example is not described as a planar and doubling metric space, in Section \ref{app:rec_cycle_graph} we show that it can be approximated with constant distortion by a shortest path metric on a series-parallel graph with constant doubling dimension.
\begin{theorem}\label{thm:strong_lower_bound}
For any $k\geq 1$ and  large enough $n$, there is an $n$-point doubling metric space $X$, such that any Ramsey tree cover of $X$ of size $k$, has distortion $\Omega\left ( n^{\frac{1}{k}} \right)$.
\end{theorem}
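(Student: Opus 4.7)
The plan is to realize the hard instance as an iterated $\beta$-composition $X = [C_N]_\beta^k$ with $N = \Theta(n^{1/k})$ and $\beta$ a sufficiently large absolute constant (chosen so that $|X| = \Theta(n)$), and to establish the lower bound by induction on the number of trees. The key inductive claim is: for all integers $t, k' \geq 1$ with $k' \leq t$, any Ramsey tree cover of $[C_N]_\beta^t$ of size $k'$ has distortion at least $cN$, for an absolute constant $c > 0$ depending only on $\beta$.

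The base case $k' = 1$ is a single dominating tree embedding of $[C_N]_\beta^t$. Picking one point from each of the $N$ outermost macro-copies yields an $N$-point subset of $X$ that is bilipschitz equivalent to $C_N$ with constant distortion (depending only on $\beta$), so the classical $\Omega(N)$ lower bound of Rabinovich--Raz for embedding $C_N$ into any tree gives the base case. For the inductive step, fix $k' \geq 2$ and $t \geq k'$, and suppose for contradiction that $T_1, \dots, T_{k'}$ is a Ramsey tree cover of $[C_N]_\beta^t$ with distortion $\alpha < cN$. Decompose $[C_N]_\beta^t$ into $N$ outer macro-copies $M_1, \dots, M_N$, each isometric up to scaling to $[C_N]_\beta^{t-1}$. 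If some tree $T_i$ were to host at least one point from every $M_j$, then choosing hosted representatives $x_j \in M_j$ would give an $N$-point subset on which $T_i$ realizes distortion at most $\alpha$; since this subset is a constant-distortion copy of $C_N$, this forces $\alpha = \Omega(N)$, contradicting $\alpha < cN$ for a small enough $c$. Hence some tree $T_{i^*}$ entirely misses some macro-copy $M_{j^*}$, so every point of $M_{j^*}$ is hosted by one of the remaining $k' - 1$ trees, giving a Ramsey tree cover of $M_{j^*} \cong [C_N]_\beta^{t-1}$ of size $k' - 1$ and distortion $\alpha$, which contradicts the inductive hypothesis (since $t - 1 \geq k' - 1 \geq 1$).

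Applying the claim with $t = k' = k$ yields $\alpha \geq cN = \Omega(n^{1/k})$, proving the theorem. The main obstacle will be ensuring that transversals of macro-copies are uniformly constant-distortion equivalent to $C_N$ at every composition level; this is resolved by fixing $\beta$ once, large enough that macro-copy diameters are dominated by the inter-macro-copy distances at every scale. Finally, as indicated in \sectionref{app:rec_cycle_graph}, the pure composition metric $[C_N]_\beta^k$ can be replaced (up to constant-factor distortion in all pairwise distances) by a shortest-path metric on a series-parallel graph of constant doubling dimension, which preserves the lower bound up to constants and yields the doubling-dimension conclusion of the theorem.
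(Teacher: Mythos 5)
Your proof is correct and takes essentially the same approach as the paper: both use the iterated composition $[C_N]_\beta^k$ and induct on the number of trees by decomposing into $N$ macro-copies of the lower-level composition, ending in either a Rabinovich--Raz bound on a transversal isometric to $C_N$, or a recursion on a copy covered by fewer trees. The only cosmetic differences are the case split (you split on whether \emph{some tree} touches every copy, while the paper splits on whether \emph{every copy} has all $k$ colors; these are dual partitions of the same case space, and both work) and your worry about choosing $\beta$ large, which is unnecessary since by the definition of $\beta$-composition the inter-copy distances are exactly $C_N$ distances, so the transversal is isometric (not merely bilipschitz) to $C_N$ for any admissible $\beta$.
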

\begin{proof}
Let $C_N$ denote the shortest path metric on the unweighted $N$-point cycle graph. For any integers $k,N\geq 1$, consider the metric space $Z_k(N)=\compKbeta{C_{N}}$, for any $\beta \geq 1/2$. We prove by induction on $k$, that any Ramsey tree cover of $Z_k(N)$ with $k$ trees has distortion at least $\frac{1}{3}|Z_k(N)|^{1/k}-1$.

For $k=1$ we have $Z_1(N)=C_N$, therefore by \cite{RR98} any embedding of $C_N$ into a tree metric requires distortion at least $N/3-1=\frac{1}{3}|Z_1(N)|-1$.  Assume the statement is true for $k-1$.

Let $T_1, T_2, \ldots, T_k$ be a Ramsey tree cover for $Z_k(N)$ with distortion $\alpha \geq 1$. Namely, for each point $v \in Z_k(N)$ there is at least one tree $T_i$ that preserves the distances of all the pairs containing $v$ up to a factor of $\alpha$. Such tree of minimal index is called a {\it home tree} of $v$.  Note that if each point of some subset $U \subseteq Z_k(N)$ has the same $T_i$ as its home tree, then the entire subspace $U$ embeds into $T_i$ with distortion $\alpha$.  For each  point $v \in Z_k(N)$ we associate a color $i \in [1,k]$: the color of $v$ is the index of its home tree. Then, all the points in $Z_k(N)$ are assigned a unique color in a way that partitions the metric into sub-metrics, where a submetric of color $i$ is embedded into a tree $T_i$ with distortion $\alpha$. We show that any coloring of the points of $Z_k(N)$ into $k$ colors has a large subspace of $Z_k(N)$ of the same color.
Note that $Z_k(N)$ is the composition of $C_{N}$ with $\left[C_{N}\right]_{\beta}^{k-1}$. This composition results in a metric space which consists of $N$ copies of $Z_{k-1}(N)$ scaled-down by a factor $1/\beta\cdot 2/N$, with the distances between any two points from different copies being the distance of their representatives in $C_N$. Denote these $N$ copies by $\{Z^j\}$, $1\leq j \leq N$.
Any coloring of the points of $Z_k(N)$ by $k$ colors must satisfy one of the following cases.
\begin{inparaenum}
\item There exists $1\le j\le N$ such that $Z^j$ is colored by at most $k-1$ colors;
\item Every $Z^j$ is colored by exactly $k$ colors.
\end{inparaenum}
In the first case, we get that there is an embedding of $Z^j$, which is a scaled-down copy of $Z_{k-1}(N)$, into a tree metric with distortion $\alpha$. By the induction hypothesis we obtain that $\alpha \geq \frac{1}{3}|Z_{k-1}(N)|^{1/{(k-1)}}-1=\frac{1}{3}N-1=\frac{1}{3}|Z_k(N)|^{1/k}-1$. In the second case, each metric $Z^j$ contains points with all $k$ colors, then by picking a point of color $1$, say, from each $Z^j$, we obtain a cycle of length $N$ embedded into the tree $T_1$ with distortion $\alpha$. Thus, by \cite{RR98} it must be that $\alpha \geq \frac{1}{3}N-1=\frac{1}{3}|Z_k(N)|^{1/k}-1$. Taking $n=N^k$ completes the proof.
\end{proof}
In the following section we show that $[C_{2N}]_\beta^k$ is, up to a constant distortion, a shortest path metric defined on a series-parallel graph $G=(V,E)$ with $O(1)$ doubling dimension. Thus, this lower bound holds for planar and doubling metrics as well.

\subsubsection{Recursive $N$-Cycle Graph}\label{app:rec_cycle_graph}
The recursive $N$-cycle graph is defined in a similar manner to the Laakso graph. $G_0$ is a single edge, $G_1$ is a cycle of length $2N$, with two additional vertices $s,t$ connected to two antipodals of the cycle with edges of weight $N$. We call the two additional edges {\em side edges}. For $i>0$, $G_i$ is constructed as follows: Take $G_1$, replace every cycle edge by a copy of $G_{i-1}$, and multiply the weight of the two side edges by $(3N)^{i-1}$. 
This is very similar to the Laakso graph \cite{L01}, but we do not recurse on the side edges, see \figureref{fig:cycle-graph}.

We record some properties: The graph $G_k$ is a series-parallel graph (and thus planar), and has doubling constant 6, being a finite subset of the Laakso graph \cite{L01}. It has 
diameter $(3N)^k$.

We view $G_k$ as $2N$ copies of $G_{k-1}$, denoted $Z^1,\dots,Z^{2N}$, arranged in a cycle, and the two side edges connected to $s,t$. Fix $1\le i< j\le 2N$, and let $C=\min\{j-i,2N+i-j\}$ be the distance in $C_{2N}$ between points $i,j$. Then the distance between any inner point (i.e., not the $s,t$ of that copy) in $Z^i$ to any inner point in $Z^j$ is at least $2C\cdot N\cdot (3N)^{k-2}$ and at most $2C\cdot N\cdot (3N)^{k-2} + (3N)^{k-1}=2C\cdot N\cdot (3N)^{k-2}(1+3/(2C))$. The former because there are $2C$ side edges of the copies of $G_{k-1}$ separating these copies, each of weight $N\cdot (3N)^{k-2}$, and the latter since the diameter of $G_{k-1}$ is $(3N)^{k-1}$. We see that the metric induced by taking 1 inner point from each $Z^i$ is up to a constant factor the metric of $C_{2N}$ scaled by a factor of $(3N)^{k-1}$.

Let $\gamma_k\cdot G_k$ denote the graph defined by $G_k$ multiplied by $\gamma_k:=1/(3N)^{k-1}$ (and $s,t$ are not in the image).

\begin{claim}\label{clm:comp:doubl}
For every $k\geq 1$, $[C_{2N}]_3^k$ embeds with constant distortion to the inner vertices of $\gamma_k\cdot G_k$.
\end{claim}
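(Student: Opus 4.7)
The plan is to prove the claim by induction on $k$, maintaining the invariant that the embedding $f_k:[C_{2N}]_3^k \to \mathrm{inner}(\gamma_k G_k)$ has all distance ratios $d_{\gamma_k G_k}(f_k(u),f_k(v))/d_{Y_k}(u,v)$ (writing $Y_k := [C_{2N}]_3^k$) lying in the fixed interval $[2/3,\,5/3]$, giving distortion at most $5/2$ uniformly in $k$. For the base case $k=1$, I would take the identity embedding: since $Y_1 = C_{2N}$ and the pendant vertices $s,t$ of $G_1$ are degree-one dead ends, the shortest-path metric of $\gamma_1 G_1 = G_1$ restricted to the $2N$ cycle vertices coincides exactly with $C_{2N}$, so the identity is an isometry (all ratios equal $1 \in [2/3,\,5/3]$).

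For the inductive step ($k \geq 2$) I would first unpack both structures. By the definition of $\beta$-composition with $\beta=3$ and $\diam(Y_{k-1}) = N$, the space $Y_k = [C_{2N}]_3[Y_{k-1}]$ consists of $2N$ copies $Y_{k-1}^{(1)},\ldots,Y_{k-1}^{(2N)}$ of $Y_{k-1}$ placed at the $2N$ vertices of $C_{2N}$, with within-copy distances rescaled by $1/(3N)$ and between-copy distances equal to the $C_{2N}$-distance between the corresponding indices. Symmetrically, $\gamma_k G_k$ is built from the $2N$ copies $Z^1,\ldots,Z^{2N}$ of $G_{k-1}$ arranged on a big cycle, and since $\gamma_k = \gamma_{k-1}/(3N)$, the restriction of $\gamma_k G_k$ to each $Z^i$ equals $(1/(3N))\cdot \gamma_{k-1} G_{k-1}$. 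I then define $f_k$ by applying $f_{k-1}$ independently inside each copy, mapping $Y_{k-1}^{(i)}$ into the inner vertices of $Z^i$. For a within-copy pair $u,v \in Y_{k-1}^{(i)}$, the two $1/(3N)$ scalings cancel, so the ratio equals that of $f_{k-1}$ and lies in $[2/3,\,5/3]$ by induction.

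For a between-copy pair $u \in Y_{k-1}^{(i)}$, $v \in Y_{k-1}^{(j)}$ with $C := d_{C_{2N}}(i,j) \geq 1$, the $Y_k$-distance is exactly $C$, while the distance bounds stated in the paragraph preceding the claim, multiplied by $\gamma_k = (3N)^{-(k-1)}$, place the $\gamma_k G_k$-distance in $[2C/3,\,2C/3 + 1]$; hence the ratio lies in $[2/3,\,2/3+1/C] \subseteq [2/3,\,5/3]$, closing the induction. The main subtlety, which I expect to be the only point worth care, is that within-copy and between-copy pairs live at disjoint distance scales in both metrics (within-copy distances are at most $\diam(Y_{k-1})/(3N) = 1/3$, while between-copy distances are at least $1$); this scale separation ensures that the distortion of $f_k$ is the span of ratios taken over the union of the two regimes rather than the product of the within-copy and between-copy distortions, which is precisely what prevents the distortion from compounding multiplicatively across the $k$ inductive steps.
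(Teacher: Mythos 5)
Your proof is correct and follows essentially the same induction as the paper's: embed copy‑by‑copy using the inductive hypothesis (the two $1/(3N)$ rescalings cancel) and handle between‑copy pairs directly via the distance estimate in the preceding paragraph. The one genuine refinement you add is maintaining a fixed ratio interval $[2/3,5/3]$ as the inductive invariant; the paper only says each copy embeds ``with constant distortion'' and does not explicitly rule out that the constant degrades with $k$, so making the uniformity explicit is worthwhile. One caveat you should be aware of: the between‑copy upper bound you quote, $2CN(3N)^{k-2}+(3N)^{k-1}$, is actually too small once $C\ge 3$. It charges for the $2C$ side edges plus one diameter's worth of slack, but omits the cycle traversal of length $N\cdot(3N)^{k-2}$ inside each of the $C-1$ intermediate copies $Z^l$, $i<l<j$, that every $a$--$b$ path must pay. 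The correct estimate is $d_{G_k}(a,b)\in\bigl[(C-\tfrac13)(3N)^{k-1},\,(C+\tfrac13)(3N)^{k-1}\bigr]$, so after scaling by $\gamma_k$ the ratio lies in $[1-\tfrac1{3C},\,1+\tfrac1{3C}]\subseteq[2/3,4/3]$ rather than $[2/3,\,2/3+1/C]$. Your invariant $[2/3,5/3]$ therefore still holds (it is in fact slightly loose), and the constant‑distortion conclusion is unaffected, but the intermediate numerology as written does not match the actual graph distances.
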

\begin{proof}
The proof is by induction. The base case $k=1$ holds trivially, since $G_1$ contains a $2N$ cycle and $\gamma_1=1$. By the induction hypothesis, each copy of $[C_{2N}]_3^{k-1}$ embeds with constant distortion to the inner points of $\gamma_{k-1}\cdot G_{k-1}$. Recall that $[C_{2N}]_3^k$ consists of $2N$ copies of $[C_{2N}]^{k-1}$ scaled by a factor of $1/(3N)$.
We can thus embed the $i$-th copy of $1/(3N)\cdot [C_{2N}]^{k-1}$ to the inner points of $1/(3N)\cdot\gamma_{k-1}\cdot Z^i=\gamma_k\cdot Z^i$ with constant distortion. Finally, we need to argue about distances between vertices of different copies. As mentioned in the previous paragraph, the distance between any inner point of $Z^i$ to any inner point in $Z^j$ is $\Theta(C\cdot(3N)^{k-1})$, (recall $C=\min\{j-i,2N+i-j\}$ is the distance in $C_{2N}$ between points $i,j$), so after scaling by $\gamma_k$ it becomes $\Theta(C)$, as required.
\end{proof}
We conclude that the lower bound of \theoremref{thm:strong_lower_bound} holds also for $G_k$, a series-parallel graph with $O(1)$ doubling dimension. Namely, any Ramsey tree cover of size $k$ requires distortion $\Omega(n^{1/k})$.
\begin{figure}
	\begin{center}
		\includegraphics[width=1.08\textwidth]{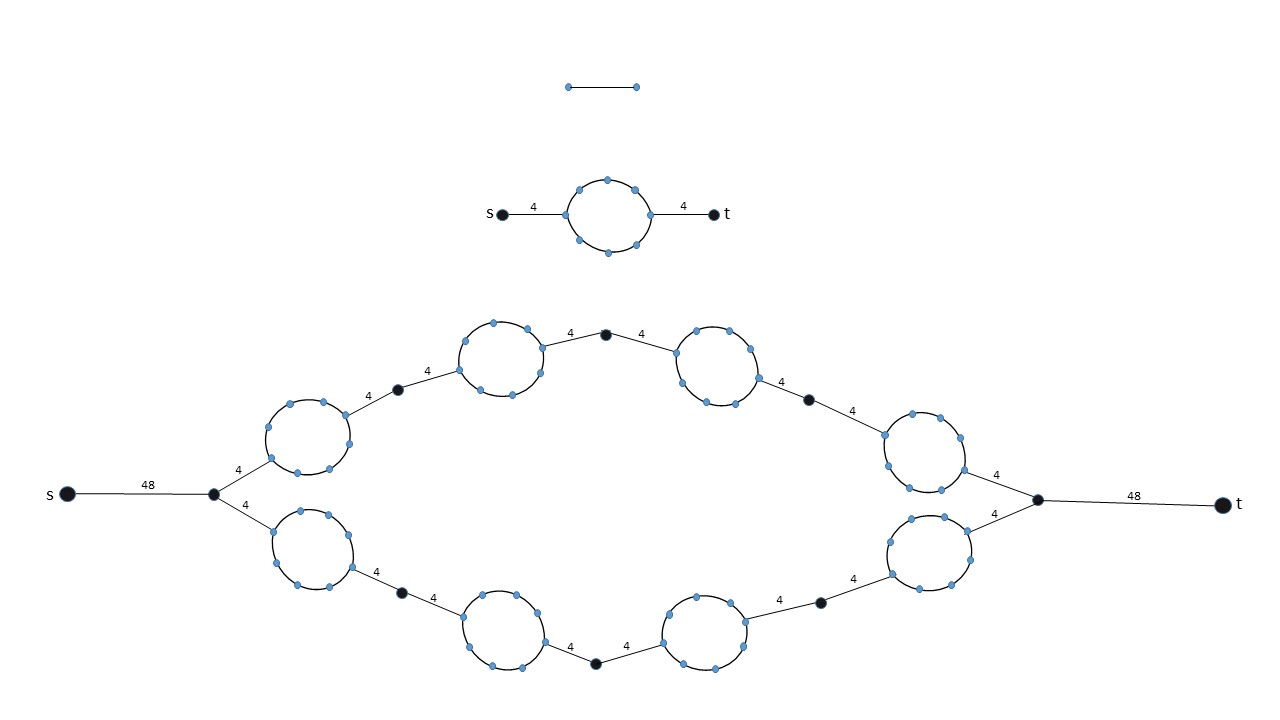}
		\caption{The first 3 levels of the recursive $4$-cycle graph.}
		\label{fig:cycle-graph}
	\end{center}
\end{figure}

\section{Open Problems}\label{sec:open}
Our work leaves several interesting open problems.
\begin{enumerate}
\item Tree cover for general metrics: In the regime where the number of trees $>>\log n$, we know a (nearly) tight tradeoff of distortion $\Theta(\alpha)$ with $O(\alpha\cdot n^{1/\alpha})$ trees. However, when the number of trees $k$ is small, the only lower bound on the distortion we are aware of is $\Omega(\log_kn)$. Is there a better lower bound? Or perhaps there exists a tree cover with $O(1)$ trees and distortion $O(\log n)$? What can one say about a tree cover with only 2 trees?

\item Tree covers for planar (and minor free) graphs: We show that using $O(1)$ trees one can obtain distortion $O(1)$. Can this be improved to distortion $1+\epsilon$? (Recall that we show here a cover of size $O((\log n)/\epsilon)^2$ for such distortion.)
\end{enumerate}

\section{Acknowledgements}
We are grateful to Michael Elkin and Shay Solomon for fruitful discussions.

\bibliographystyle{alpha}
\bibliography{fewTrees}

\appendix
\section{Proof of Lemma \ref{lem:landmarks}}\label{app:portal-proof}
We first describe how to choose the set of landmarks for each vertex.
Fix $x\in V$, and let $z_0\in P$ be the closest vertex of $P$ to $x$. Add $z_0$ to $L_x$. We traverse $P$ in one direction starting from $z_0$, and keep adding vertices to $L_x$. For integers $i>0$, assume we already added $\{z_0,\dots,z_{i-1}\}$, and let $z_i\in P$ be the first vertex we encounter on the path $P$ after $z_{i-1}$, that satisfies
\begin{equation}\label{eq:1}
d_G(x,z_i)<\frac{d_G(x,z_{i-1})+d_P(z_i,z_{i-1})}{1+\epsilon}~.
\end{equation}
Then we add $z_i$ to $L_x$ and continue with $i+1$.
We also do the symmetric thing for $i<0$ while traveling in the opposite direction starting from $z_0$. Next we show that we can only add a small number of vertices to $L_x$. For $i>0$:
\begin{eqnarray*}
d_G(x,z_i)&<&\frac{d_G(x,z_{i-1})+d_P(z_i,z_{i-1})}{1+\epsilon}\\
&\le&(1-\epsilon/2)\cdot d_G(x,z_{i-1})+d_P(z_i,z_{i-1})\\
&\le&d_G(x,z_{i-1})-\epsilon/2\cdot d_G(x,z_0)+d_P(z_i,z_{i-1})~,
\end{eqnarray*}
where the first inequality uses that $\epsilon<1$ and the second that $z_0$ is the closest vertex in $P$ to $x$. Applying this calculation iteratively, and using the fact that $P$ is a shortest path, we conclude that
\[
d_G(x,z_i)<d_G(x,z_0)-i\cdot\epsilon/2\cdot d_G(x,z_0)+d_P(z_i,z_0)~.
\]
Note that if $i>4/\epsilon$ we get that
\[
d_G(x,z_i)<d_G(x,z_0)-2 d_G(x,z_0)+d_P(z_i,z_0)=d_P(z_i,z_0)-d_G(x,z_0)~,
\]
which contradicts the fact that $P$ is a shortest path (the distance from $z_0$ to $z_i$ is shorter going through $x$). The same argument bounds the number of landmarks added from the other direction, so their total number is at most $8/\epsilon$.

It remains to check that if the shortest path $\pi$ from $x$ to $y$ intersects $P$, then there exist $u\in L_x$ and $v\in L_y$ satisfying
\[
d_G(x,u)+d_P(u,v)+d_G(v,y)\le(1+\epsilon)d_G(x,y)~.
\]
Let $u'$ (respectively, $v'$) be the first (resp., last) vertex on $\pi$ that lies on $P$. Consider first $u'$, we want to show there is a $u\in L_x$ so that $d_G(x,u)+d_P(u,u')\le (1+\epsilon)\cdot d_G(x,u')$. If $u'\in L_x$ we are done, otherwise it lies after some landmark $u:=z_i\in L_x$. Since $u'$ was not chosen to $L_x$, we have by \eqref{eq:1} that
\[
(1+\epsilon)\cdot d_G(x,u')\ge d_G(x,u)+d_P(u,u')~.
\]
The same argument asserts there is $v\in L_y$ such that
\[
(1+\epsilon)\cdot d_G(y,v')\ge d_G(y,v)+d_P(v,v')~.
\]
Combining these bounds we get that
\begin{eqnarray*}
\lefteqn{d_G(x,u)+d_P(u,v)+d_G(v,y)}\\&\le& ((1+\epsilon)\cdot d_G(x,u')-d_P(u,u'))+d_P(u,v)+((1+\epsilon)\cdot d_G(y,v')-d_P(v,v'))\\
&\le& (1+\epsilon)\cdot d_G(x,u') +(1+\epsilon)\cdot d_G(y,v') +d_P(u',v')\\
&\le&(1+\epsilon)\cdot d_G(x,y)~.
\end{eqnarray*}

\end{document}